\newtheorem{theorem}{Theorem}[section]
\newtheorem{proposition}[theorem]{Proposition}
\newtheorem{lemma}[theorem]{Lemma}
\newtheorem{remark}{Remark}
\newtheorem{example}{Example}
\newcommand{\HH}{\mathcal{H}}
\newcommand{\KK}{\mathcal{K}}
\newcommand{\D}{\mathcal{D}}
\renewcommand{\S}{\mathcal{S}}
\newcommand{\A}{\mathcal{A}}
\renewcommand{\L}{\mathcal{L}}
\newcommand{\U}{\mathcal{U}}
\renewcommand{\u}{\mathfrak{u}}
\newcommand{\T}{\operatorname{T}\!}
\renewcommand{\H}{\operatorname{H}\!}
\newcommand{\V}{\operatorname{V}\!}
\newcommand{\Ker}{\operatorname{Ker}}
\newcommand{\ad}{\operatorname{ad}}
\newcommand{\diag}{\operatorname{diag}}
\renewcommand{\phi}{\varphi}
\newcommand{\1}{{\mathbf 1}}
\newcommand{\dt}{\operatorname{d}\!t}
\newcommand{\dd}[1]{\frac{\operatorname{d}}{\operatorname{d}\!#1}}
\newcommand{\length}[1]{\operatorname{Length}[#1]}
\renewcommand{\Re}{\operatorname{Re}}
\renewcommand{\Im}{\operatorname{Im}}
\newcommand{\ket}[1]{|{#1}\rangle}
\newcommand{\bra}[1]{\langle #1 |}
\newcommand{\braket}[2]{\langle #1|#2\rangle}
\newcommand{\ketbra}[2]{|#1\rangle\langle #2|}
\newcommand{\dist}[2]{\operatorname{Dist}(#1,#2)}
\newcommand{\angleB}[2]{\mathcal{L}_{\textit{B}}(#1,#2)}
\newcommand{\ptexp}{\exp_{+}}
\newcommand{\ntexp}{\exp_{-}}
\newcommand{\obs}[1]{\hat{#1}}
\newcommand{\Tr}{\operatorname{Tr}}
\begin{document}
\title[Quantum speed limits and optimal Hamiltonians]{Quantum speed limits and optimal Hamiltonians for driven systems in mixed states}
\author{Ole Andersson}
\email{olehandersson@gmail.com}
\author{Hoshang Heydari}
\thanks{The second author acknowledges the financial support from the Swedish Research Council (VR), grant number
2008-5227.}
\address{Department of Physics, Stockholm University, 10691 Stockholm, Sweden}
\date{\today}

\begin{abstract}
Inequalities of Mandelstam-Tamm and Margolus-Levitin type provide lower bounds on the time it takes for a quantum system to evolve from one state into another. Knowledge of such bounds, called quantum speed limits, is of utmost importance in virtually all areas of physics, where determination of the minimum time required for a quantum process is of interest.
Most Mandelstam-Tamm and Margolus-Levitin inequalities found in the literature have been derived from growth estimates for the Bures length, which is a statistical distance measure. In this paper we derive such inequalities by differential geometric methods, and we compare the obtained quantum speed limits with those involving the Bures length. We also characterize the Hamiltonians which optimize the evolution time for generic finite-level quantum systems.
\end{abstract}

\maketitle

\section{Introduction}
The fundamental problem of determining the minimum time required to perform a quantum process, and the dual problem of designing time-optimal Hamiltonians, have recently attracted much attention because of their significance in several modern applications of quantum mechanics. These include quantum metrology \cite{Zwierz_etal2010,Giovannetti_etal2011,Zwierz_etal2012}, quantum computation and information \cite{Bekenstein1981,Lloyd2000,Giovannetti_etal2003}, and optimal control theory \cite{Caneva_etal2009,Bason_etal2012,Poggi_etal2013,Hegerfeldt2013,Poggi_etal2013EPL}.
Limits on the minimal evolution time also play a role in cosmology \cite{Bekenstein1981}, and quantum thermodynamics \cite{Deffner_etal2010}.

In this paper, we use the terminology introduced by Margolus and Levitin \cite{Margolus_etal1998} and call lower bounds on the time it takes for a quantum system to evolve from one state into another \emph{quantum speed limits}. 
More specifically, we refer to lower bounds that involve the energy uncertainty as Mandelstam-Tamm ({\sc MT}) quantum speed limits.
Bhattacharyya \cite{Bhattacharyya1983} was one of the first to derive an {\sc MT} quantum speed limit (see also \cite{Fleming1973}). 
Assuming Mandelstam and Tamm's uncertainty relation, he showed through accurate estimates of the rate of change of the ``quantum non-decay probability'' that it takes at least the time $\pi\hbar/2\Delta E$ for a system to evolve between two orthogonal pure states.
A few years later,  Anandan and Aharonov \cite{Anandan_etal1990} confirmed Bhattacharyya's result.
But more importantly, they revealed the geometric nature of the {\sc MT} quantum speed limits, showing that $1/\hbar$ times the path integral of the energy uncertainty of a unitarily evolving pure state equals the Fubini-Study length of the curve traced out by the state.

Few evolution time estimates for quantum systems in mixed states have been derived by differential geometric methods, despite the differential geometric approach in \cite{Anandan_etal1990}. 
Until now, most {\sc MT} quantum speed limits for mixed states have been obtained from estimates of the growth of the Bures length \cite{Bures1969}, which is a statistical distance measure. Soon after the publication of \cite{Anandan_etal1990}, Uhlmann \cite{Uhlmann1992Energy} showed that the time it takes for a system to evolve from one mixed state into another is bounded from below by $\hbar$ times the fraction of the Bures length between the states and the energy uncertainty of the system. Uhlmann's result has been verified in several publications, e.g. \cite{Jones_etal2010,Zwierz2012,Giovannetti_etal2003(2),Giovannetti_etal2003proc,Deffner_etal2013}.

In this paper we develop Anandan and Aharonov's approach. We use symplectic reduction to construct a principal fiber bundle over a general space of isospectral (i.e. unitarily equivalent) mixed quantum states. The bundle, which generalizes the Hopf bundle for pure states, gives rise in a canonical way to a Riemannian metric and a symplectic form on the space of isospectral mixed states. 
Using these we then derive an {\sc MT} quantum speed limit for unitarily driven quantum systems, which proves to be sharper than the Uhlmann limit.

The speed of a quantum evolution depends also on the system's energy resources.
Margolus and Levitin \cite{Margolus_etal1998} showed that the time it takes for a non-driven system to unitarily evolve between two orthogonal pure states is bounded from below by a factor that is inversely proportional to the energy of the system.
Generalizing Margolus and Levitin's result to systems in mixed states has proven to be quite difficult.
Giovannetti \emph{et al.} \cite{Giovannetti_etal2003(2),Giovannetti_etal2003proc} have derived an ``implicit'' Margolus-Levitin ({\sc ML}) quantum speed limit for non-driven systems, and Deffner and Lutz \cite{Deffner_etal2013} used the methods put forward by Jones and Kok \cite{Jones_etal2010} to derive an estimate of Margolus-Levitin type for driven systems.
The present paper also contains a generalization of Margolus and Levitin's quantum speed limit to driven systems in mixed states. The speed limit is different from that of Deffner and Lutz, and it reduces to a greater speed limit than theirs for systems with time-independent Hamiltonians.

The paper is organized as follows. In Section \ref{red pur bund}, we set up the geometric framework, and introduce most of the notation we use in the rest of the paper.
In Sections \ref{evolution time} and \ref{margolus} we derive an {\sc MT} quantum speed limit and an {\sc ML} quantum speed limit, respectively, for the time it takes to unitarily run a quantum system from one mixed state into another. Section \ref{optimal hamiltonians} contains a characterization of the Hamiltonians that optimize evolution time for finite-level systems in generic mixed states, as well as an example of a system for which the {\sc MT} quantum speed limit derived in Section \ref{evolution time} is greater than the corresponding limit involving the Bures length. The paper ends with a conclusion.

\subsection{Conventions}
Evolving quantum states will be represented by curves of density operators.
The curves are assumed to be defined on an unspecified interval $0\leq t\leq \tau$, and the final time $\tau$ will be referred to as the evolution time.
Compositions of linear mappings will be written as concatenations.
By a ``function'' we mean a real-valued smooth function, and by a ``functional'' we mean a real-valued linear function. 
We use the same notation, namely $\1$, for the identity map of every space.

\section{Reduced purification bundles}\label{red pur bund}
This paper concerns quantum systems in mixed states which evolve unitarily.
The systems will be modeled on a Hilbert space $\HH$ and their states will be represented by density operators.
We write $\D(\HH)$ for the space of density operators on $\HH$, and $\D_k(\HH)$ for the space of density operators on $\HH$ which have finite 
rank at most $k$.

A quantum state is called pure if it can be represented by a single unit vector.
In quantum information theory and geometric quantum mechanics it is common to make use of the fact that mixed states can be considered as reduced pure states \cite{Nielsen_etal2010,Bengtsson_etal2008}. 
In this paper, we will use the fact that if $\KK$ is a $k$-dimensional Hilbert space and $\S(\KK,\HH)$ is the unit sphere in the space $\L(\KK,\HH)$ of linear operators from $\KK$ to $\HH$, equipped with the Hilbert-Schmidt inner product, then 
$\S(\KK,\HH)\longrightarrow\D_k(\HH),\psi\mapsto\psi\psi^\dagger$ is a surjective map.

A density operator whose evolution is governed by a von Neumann equation will remain in a single orbit for the left conjugation action of the 
unitary group of $\HH$ on $\D(\HH)$. The orbits are in bijective correspondence with the possible spectra of density operators on 
$\HH$.
By the spectrum of a density operator with $k$-dimensional support we mean the pair of sequences 
\begin{equation}\label{spectrum}
\sigma=(p_1, p_2, \dots, p_l; m_1, m_2 , \dots, m_l),
\end{equation}
where the $p_j$ are the density operator's different \emph{positive} eigenvalues, listed in descending order, and the $m_j$ are the eigenvalues' multiplicities. Throughout the 
rest of this paper we fix such a spectrum $\sigma$ and write $\D(\sigma)$ for the corresponding $\U(\HH)$-orbit in $\D_k(\HH)$. Furthermore, we 
fix an orthonormal \emph{computational basis} $\{\ket{1},\ket{2},\dots,\ket{k}\}$ in $\KK$ and we define a spectral characteristic Hermitian operator $P(\sigma)$ on $\KK$ by 
\begin{equation}\label{P}
P(\sigma)=\sum_{j=1}^{l}p_j\Pi_j,\qquad \Pi_j=\sum_{i=m_1+\dots +m_{j-1}+1}^{m_1+\dots +m_j}\ketbra{i}{i}.
\end{equation}

\subsection{Bundles of purifications over orbits of isospectral density operators}
The real part and the imaginary part of the Hilbert-Schmidt inner product define a Riemannian metric and a symplectic form on $\L(\KK,\HH)$:
\begin{equation}\label{Gustaf Otto}
G(X,Y)=\frac{1}{2}\Tr(X^\dagger Y + Y^\dagger X),\qquad \Omega(X,Y)=\frac{1}{2i}\Tr(X^\dagger Y-Y^\dagger X).
\end{equation}
Moreover, the unitary groups $\U(\HH)$ and $\U(\KK)$ act on $\L(\KK,\HH)$ from the left and from the right, respectively, by isometric and symplectic transformations:
\begin{equation}
L_U(\psi)=U\psi,\qquad R_V(\psi)= \psi V.
\end{equation}
We write $\u(\HH)$ and $\u(\KK)$ for the Lie algebras of $\U(\HH)$ and $\U(\KK)$, and $X_\xi$ and $\hat{\eta}$ for the fundamental vector fields corresponding to $\xi$ in $\u(\HH)$
and $\eta$ in $\u(\KK)$:
\begin{equation}
X_\xi(\psi)=\dd{t}\Big[L_{\exp(t\xi)}(\psi)\Big]_{t=0}=\xi\psi,\qquad \hat{\eta}(\psi)=\dd{t}\Big[R_{\exp(t\eta)}(\psi)\Big]_{t=0}=\psi\eta.
\end{equation}
Every functional on $\u(\KK)$ has the form $\mu_{\Lambda}(\xi)=i\hbar\Tr(\Lambda\xi)$ for some Hermitian operator $\Lambda$ on $\KK$.
Let $\u(\KK)^*$ be the space of all functionals on $\u(\KK)$ and define $J:\L(\KK,\HH) \to \u(\KK)^*$ by $J(\psi)=\mu_{\psi^\dagger\psi}$.
The following theorem is proved in \cite{GUR}.
\begin{theorem}
$J$ is a coadjoint-equivariant momentum map for the Hamiltonian $\U(\KK)$-action on $\L(\KK,\HH)$, and $\mu_{P(\sigma)}$ is a regular value of $J$ whose isotropy group acts freely and properly on $J^{-1}(\mu_{P(\sigma)})$.
\end{theorem}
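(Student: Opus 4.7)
The plan is to split the theorem into four tasks and dispatch them in turn: (i) the defining Hamiltonian equation $\iota_{\hat\eta}\Omega = d\langle J,\eta\rangle$ for every $\eta\in\u(\KK)$; (ii) coadjoint equivariance of $J$ under the right action $R_V$; (iii) surjectivity of $dJ|_\psi$ at every $\psi\in J^{-1}(\mu_{P(\sigma)})$; and (iv) freeness and properness of the induced action of the coadjoint isotropy group on $J^{-1}(\mu_{P(\sigma)})$. The single structural fact that makes both (iii) and (iv) go through is that, by \eqref{P}, the operator $P(\sigma)$ is strictly positive on $\KK$; consequently every $\psi\in J^{-1}(\mu_{P(\sigma)})$ satisfies $\psi^\dagger\psi = P(\sigma)$ and is therefore injective as an operator $\KK\to\HH$.

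For (i), I would substitute the definitions and use cyclicity of $\Tr$ together with $\eta^\dagger=-\eta$ to rewrite $\iota_{\hat\eta}\Omega|_\psi(Y) = \Omega(\psi\eta, Y)$ as a scalar multiple of $\Tr\bigl((\psi^\dagger Y + Y^\dagger \psi)\eta\bigr)$; a direct differentiation of the scalar function $\psi\mapsto i\hbar\Tr(\psi^\dagger\psi\,\eta) = \langle J(\psi),\eta\rangle$ produces the same trace up to the normalization built into the definitions of $\Omega$ and $\mu_\Lambda$. For (ii), a one-line computation using cyclicity gives $J(\psi V)(\xi) = i\hbar\Tr(V^\dagger\psi^\dagger\psi V\xi) = i\hbar\Tr(\psi^\dagger\psi\,\Ad_V\xi)$, which is exactly the coadjoint equivariance condition for the right $\U(\KK)$-action.

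For (iii), differentiating yields $dJ|_\psi(Y) = \mu_{Y^\dagger\psi + \psi^\dagger Y}$; given a target $\mu_\Lambda$ with $\Lambda$ Hermitian, the explicit choice $Y = \tfrac{1}{2}\psi P(\sigma)^{-1}\Lambda$ satisfies $Y^\dagger\psi + \psi^\dagger Y = \Lambda$ (the invertibility of $P(\sigma)$ is precisely what legitimates this formula), proving surjectivity. For (iv), the coadjoint isotropy of $\mu_{P(\sigma)}$ is the centralizer $Z_{P(\sigma)}\subset\U(\KK)$ of $P(\sigma)$, which in the computational basis is the compact block-diagonal subgroup $\U(m_1)\times\cdots\times\U(m_l)$; if $V\in Z_{P(\sigma)}$ fixes some $\psi\in J^{-1}(\mu_{P(\sigma)})$, then $\psi(V-\1)=0$ and the injectivity of $\psi$ forces $V=\1$, while properness is automatic from compactness of $Z_{P(\sigma)}$.

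None of the individual steps is genuinely hard; the one place I would be most careful is in propagating the factors of $1/2$, $i$, and $\hbar$ between $\Omega$, $\mu_\Lambda$, and the right-action convention for coadjoint equivariance, so that (i) and (ii) come out with matching signs. Once that bookkeeping is in order, the positivity of $P(\sigma)$ does all the geometric work: it simultaneously supplies the explicit right-inverse to $dJ|_\psi$ in (iii) and the trivial-stabilizer argument in (iv).
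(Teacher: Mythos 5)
The paper does not actually prove this theorem in situ---it is stated and immediately deferred to reference \cite{GUR}---so there is no internal proof to compare against; judged on its own, your argument is correct and complete in outline, and it is the standard direct verification one would expect to find in that reference. All four steps check out: the computation $\psi^\dagger Y+Y^\dagger\psi$ appears correctly in both $\iota_{\hat\eta}\Omega$ and $dJ_\eta$; equivariance is the one-line $\Ad$-conjugation you give; the right-inverse $Y=\tfrac{1}{2}\psi P(\sigma)^{-1}\Lambda$ does satisfy $Y^\dagger\psi+\psi^\dagger Y=\Lambda$ on the level set (and $Y$ need not be ``Hermitian'' in any sense, so the non-commutativity of $P(\sigma)^{-1}$ and $\Lambda$ is harmless); and freeness plus properness follow exactly as you say from injectivity of $\psi$ (forced by $\psi^\dagger\psi=P(\sigma)>0$) and compactness of the block-diagonal centralizer $\U(m_1)\times\cdots\times\U(m_l)$, which is compact because $\KK$ is finite-dimensional even when $\HH$ is not. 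The one point you flag but do not resolve is real: with the paper's literal normalizations one gets $dJ_\eta=2\hbar\,\iota_{\hat\eta}\Omega$, so $J$ as defined is the momentum map for $2\hbar\,\Omega$ (equivalently $J/2\hbar$ is the momentum map for $\Omega$). This is purely a choice of scale for the symplectic form and affects none of the substantive conclusions (regularity, freeness, properness, or the identification $S(\sigma)=J^{-1}(\mu_{P(\sigma)})$), but a finished write-up should either rescale $\Omega$ or $\mu_\Lambda$ so that the defining identity holds on the nose.
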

\noindent Now let $S(\sigma)$ be the set of $\psi$ in $\L(\KK,\HH)$ satisfying $\psi^\dagger\psi=P(\sigma)$, and define 
\begin{equation}
\pi:\S(\sigma)\to\D(\sigma),\quad\pi(\psi)=\psi\psi^\dagger.
\end{equation}
The map $\pi$ is a principal fiber bundle with gauge group
$\U(\sigma)$ consisting of all unitary operators on $\KK$ which commute with $P(\sigma)$. In fact, $S(\sigma)=J^{-1}(\mu_{P(\sigma)})$, $\U(\sigma)$ is the isotropy group of $\mu_{P(\sigma)}$, and $\pi$ is canonically isomorphic to the reduced space submersion $J^{-1}(\mu_{P(\sigma)})\to J^{-1}(\mu_{P(\sigma)})/\U(\sigma)$, see \cite{GUR}. The action of $\U(\sigma)$ on $\S(\sigma)$ is induced by the 
right action of $\U(\KK)$ on $\L(\KK,\HH)$. We write $\u(\sigma)$ for the Lie algebra of $\U(\sigma)$. This algebra consists of all anti-Hermitian operators on $\KK$ which commute with $P(\sigma)$.
It follows from the Marsden-Weinstein-Meyer symplectic reduction theorem \cite{Marsden_etal1974,Meyer_1973} that $\D(\sigma)$ admits a symplectic form which is pulled back to $\Omega|_{\S(\sigma)}$ by  $\pi$. We will not need the full strength of this fact, only that $S(\sigma)$ is preserved by the left action by $\U(\HH)$. This, in turn, implies that solutions to Schr\"{o}dinger equations which extend from elements in $\S(\sigma)$ remain in $\S(\sigma)$.

\subsection{Riemannian structure and the mechanical connection}
The metric $G$ restricts to a gauge-invariant metric on $\S(\sigma)$. We define the vertical and horizontal bundles over $\S(\sigma)$ to be the 
subbundles $\V\S(\sigma)=\Ker d\pi$ and $\H\S(\sigma)=\V\S(\sigma)^\bot$
of the tangent bundle $\T\S(\sigma)$, see Figure \ref{bundle}. Here $d\pi $ is the differential of $\pi$ and $^\bot$ denotes the orthogonal complement with respect to 
$G$. Vectors in $\V\S(\sigma)$ and $\H\S(\sigma)$
are called vertical and horizontal, respectively. We equip $\D(\sigma)$ with the unique metric $g$ which makes $\pi$ a Riemannian submersion. Thus, $g$ is such that the restriction of $d\pi$ to every fiber of $\H\S(\sigma)$ is an isometry.
\begin{figure}[htbp]
\centering
\includegraphics[width=0.7\textwidth,height=0.3\textheight]{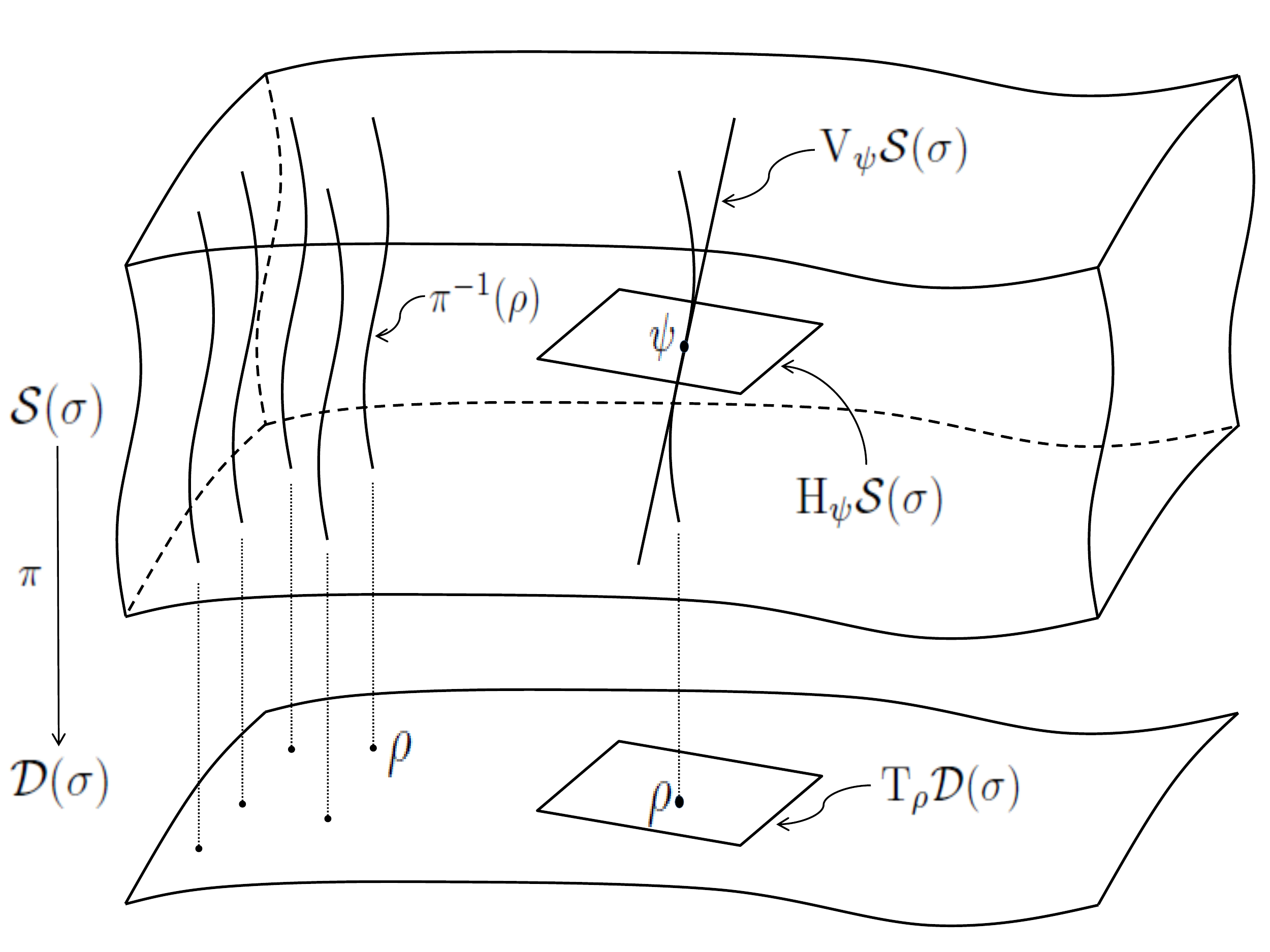}
\caption{Illustration of the bundle $\pi$ and the decomposition of each tangent space of $\S(\sigma)$ into a vertical and a horizontal subspace.}
\label{bundle}
\end{figure}

\begin{example}
If $\sigma=(1;1)$, the operators in $\D(\sigma)$ represent pure states. In fact, $\D(\sigma)$ is the projective space over $\HH$, $\S(\sigma)$ is the unit sphere in $\HH$, $\pi$ is the Hopf bundle, and $g$ is the Fubini-Study metric \cite{Anandan_etal1990,Kobayashi_etal1996I,Kobayashi_etal1996II}.
\end{example}

\begin{remark}
For a general spectrum, $\S(\sigma)$ is diffeomorphic to the Stiefel manifold of $k$-frames in $\HH$, see \cite{Kobayashi_etal1996I,Kobayashi_etal1996II}. However, $g$ is different from the Riemannian metric induced by the standard bi-invariant metric on $\U(\HH)$.
\end{remark}

The fundamental vector fields of the gauge group action on $\S(\sigma)$ yield canonical isomorphisms between $\u(\sigma)$ and the fibers in 
$\V\S(\sigma)$. Furthermore, $\H\S(\sigma)$ is the kernel bundle of the gauge invariant mechanical connection $\A_{\psi}=\mathbb{I}_{\psi}^{-1}\mathbb{J}_{\psi}$, where 
\begin{align}
\mathbb{I}&:\S(\sigma)\times\u(\sigma)\to \u(\sigma)^*, \hspace{10pt} \mathbb{I}_{\psi}\xi(\eta)=G(\hat\xi(\psi),\hat{\eta}(\psi)),\\
\mathbb{J}&:\T\S(\sigma)\to \u(\sigma)^*, \hspace{33pt}\mathbb{J}_{\psi}(X)(\xi)=G(X,\hat\xi(\psi)),
\end{align}
are the locked inertia tensor and metric momentum map, respectively.
The inertia tensor is of constant bi-invariant type since $\mathbb{I}_{\psi}$ is an adjoint-invariant form on $\u(\sigma)$ which is independent 
of $\psi$. Thus all $\mathbb{I}_{\psi}$:s define the same metric on $\u(\sigma)$, namely 
\begin{equation}\label{beta}
\xi\cdot\eta=-\frac{1}{2}\Tr\Big(\big(\xi \eta+\eta \xi\big)P(\sigma)\Big).
\end{equation}
This metric can be used to derive the following explicit formula for the mechanical connection:
\begin{equation}\label{connection}
\A_\psi(X)=\sum_{j=1}^l\Pi_j\psi^\dagger X\Pi_jP(\sigma)^{-1}.
\end{equation}
For details consult \cite{GUR}. The next proposition will be important in Section \ref{optimal hamiltonians}.

\begin{proposition}\label{geodesics conserved moment}
Geodesics in $\S(\sigma)$ have conserved metric momenta. Therefore, a geodesic in $\S(\sigma)$ which is initially horizontal remains horizontal.
\end{proposition}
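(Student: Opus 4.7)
The plan is to deduce both assertions from the standard Noether principle for isometric group actions. The right action of $\U(\KK)$ on $\L(\KK,\HH)$, $R_V(\psi)=\psi V$, is a $G$-isometry, since for every $X,Y$ one has $G(XV,YV)=\frac{1}{2}\Tr(V^\dagger X^\dagger YV+V^\dagger Y^\dagger X V)=G(X,Y)$ by cyclicity of the trace. Restricting to the subgroup $\U(\sigma)$, which preserves $\S(\sigma)$, we obtain an isometric action of $\U(\sigma)$ on $(\S(\sigma),G|_{\S(\sigma)})$. Consequently, each fundamental vector field $\hat\xi$ with $\xi\in\u(\sigma)$ is a Killing vector field on $\S(\sigma)$.

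Given a geodesic $\gamma:[0,\tau]\to\S(\sigma)$ of the restricted metric, differentiate $t\mapsto G(\dot\gamma(t),\hat\xi(\gamma(t)))=\mathbb{J}_{\gamma(t)}(\dot\gamma(t))(\xi)$. Using the Levi-Civita connection $\nabla$ of $G|_{\S(\sigma)}$ and $\nabla_{\dot\gamma}\dot\gamma=0$, the derivative equals $G(\dot\gamma,\nabla_{\dot\gamma}\hat\xi)$. The Killing property of $\hat\xi$ means that $\psi\mapsto G(\nabla_Y\hat\xi,Y)$ is skew in $Y$ and vanishes on the diagonal, so this derivative is zero. Hence $\mathbb{J}_{\gamma(t)}(\dot\gamma(t))$ is a constant element of $\u(\sigma)^*$ along the geodesic.

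For the second assertion, recall that the fiber of $\pi$ through $\psi$ is the $\U(\sigma)$-orbit of $\psi$, so its tangent space $\V_\psi\S(\sigma)$ is exactly $\{\hat\eta(\psi):\eta\in\u(\sigma)\}$. A vector $X\in\T_\psi\S(\sigma)$ is therefore horizontal if and only if $G(X,\hat\xi(\psi))=0$ for every $\xi\in\u(\sigma)$, i.e.\ if and only if $\mathbb{J}_\psi(X)=0$. If $\gamma$ is initially horizontal then $\mathbb{J}_{\gamma(0)}(\dot\gamma(0))=0$, and by the conservation law just established $\mathbb{J}_{\gamma(t)}(\dot\gamma(t))=0$ for all $t$, so $\dot\gamma(t)$ remains horizontal.

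The only delicate point is making sure that one works with the intrinsic Levi-Civita connection of $G|_{\S(\sigma)}$ rather than with the ambient flat connection on $\L(\KK,\HH)$, since $\S(\sigma)$ need not be totally geodesic; but this is automatic once one invokes Noether's theorem for the induced Riemannian structure, and no explicit formula for the second fundamental form is required.
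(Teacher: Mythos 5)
Your proof is correct, and it establishes exactly the conservation law the paper uses; the difference is only in how that law is obtained. The paper differentiates the explicit trace formula $\mathbb{J}_{\psi}(\dot{\psi})(\xi)=\frac{1}{2}\Tr\big((\dot{\psi}^\dagger\psi-\psi^\dagger\dot{\psi})\xi\big)$ directly and observes that the second ambient derivative may be replaced by $\nabla_t\dot{\psi}$ (the normal component drops out because $\hat\xi(\psi)$ is tangent to $\S(\sigma)$), so the derivative vanishes along geodesics. You instead first check that the right $\U(\sigma)$-action is isometric for $G$, conclude that each $\hat\xi$ with $\xi\in\u(\sigma)$ is a Killing field on $\S(\sigma)$, and then invoke the standard Noether/Killing lemma $\frac{d}{dt}G(\dot\gamma,\hat\xi)=G(\dot\gamma,\nabla_{\dot\gamma}\hat\xi)=0$. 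The two arguments are two faces of the same computation, and your second half (horizontality $\Leftrightarrow$ $\mathbb{J}_\psi(X)=0$, hence preserved) coincides with the paper's. What your abstract route buys is a clean, explicit treatment of the ambient-versus-intrinsic connection issue, which the paper's two-line computation leaves implicit; what the paper's concrete computation buys is the explicit formula for $\frac{d}{dt}\mathbb{J}_\psi(\dot\psi)(\xi)$ in terms of $\nabla_t\dot\psi$, which it immediately reuses (in the example that follows) to show that Schr\"odinger evolutions with time-independent Hamiltonians also conserve the metric momentum --- a fact that does not follow from the Killing-field argument alone, since those curves need not be geodesics.
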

\begin{proof}
Let $\psi=\psi(t)$ be a curve in $\S(\sigma)$ and $\xi$ be any element in $\u(\sigma)$. Then
\begin{equation}
\dd{t}\mathbb{J}_{\psi}(\dot{\psi})(\xi)
=\frac{1}{2}\dd{t}\Tr\Big(\big(\dot{\psi}^\dagger\psi-\psi^\dagger\dot{\psi}\big)\xi\Big)
=\frac{1}{2}\Tr\Big(\big(\nabla_t\dot{\psi}^\dagger\psi-\psi^\dagger\nabla_t\dot{\psi}\big)\xi\Big).
\end{equation}
Thus $\mathbb{J}_{\psi}(\dot{\psi})$ is constant if $\psi$ is a geodesic.
\end{proof}
\begin{example}
The proof of Proposition \ref{geodesics conserved moment} also shows that solutions in $\S(\sigma)$ to Schr\"{o}dinger equations with time-independent Hamiltonians have conserved metric momenta. 
For if $\psi$ satisfies $i\hbar\dot\psi=\obs{H}\psi$, where $\obs{H}$ is a time-independent Hamiltonian, then 
\begin{equation}
\frac{1}{2}\Tr\left(\left(\nabla_t\dot{\psi}^\dagger\psi-\psi^\dagger\nabla_t\dot{\psi}\right)\xi\right) 
=-\frac{1}{2\hbar^2}\Tr\left(\left(\psi^\dagger\obs{H}^2\psi-\psi^\dagger\obs{H}^2\psi\right)\xi\right)=0.
\end{equation}
\end{example}
\noindent In Section \ref{optimal hamiltonians} we restrict our study to the case when $\HH$ is finite dimensional and the density operators in $\D(\sigma)$ are invertible. Under these conditions, we characterize the Hamiltonians which transport elements of $\S(\sigma)$ along horizontal geodesics.

\section{An inequality of Mandelstam-Tamm type}\label{evolution time}
Anandan and Aharonov \cite{Anandan_etal1990} showed that the distance between two pure quantum states equals the length of that evolution curve connecting the two states which has the least average fluctuation in energy.
In this section we generalize Aharonov and Anandan's result to evolutions of quantum systems in mixed states. To be precise, we show that $1/\hbar$ times the path integral of the energy uncertainty of an evolving mixed state is bounded from below by the length of the curve traced out by the density operator of the state, and we show that 
every curve of isospectral density operators is generated by a Hamiltonian 
for which $1/\hbar$ times the uncertainty path integral \emph{equals} the curve's length. 
These observations give rise to an {\sc MT} evolution time estimate, which we compare with previously established {\sc MT} estimates involving the Bures length.

\subsection{Parallel and perpendicular Hamiltonians}
The average energy function $H$ of a Hamiltonian $\obs{H}$ on $\HH$ is defined by $H(\rho)=\Tr(\obs{H}\rho)$.
We write $X_H$ for the Hamiltonian vector field of $H$, $X_H(\rho)=[\rho,\obs{H}]/i\hbar$. 
This field has a distinguished gauge-invariant lift $X_{\obs{H}}$ to $\S(\sigma)$,
$X_{\obs{H}}(\psi)=\obs{H}\psi/i\hbar$. We say that $\obs{H}$ is \emph{parallel} at a density operator $\rho$ if $X_{\obs{H}}(\psi)$ is horizontal at some, hence every, $\psi$ in the fiber over $\rho$. Furthermore, we say that $\obs{H}$ parallel transports $\rho$ if the solution curve to the initial value Schr\"odinger equation
\begin{equation} \label{evolution_equation}
\dot{\psi}=X_{\obs{H}}(\psi),\qquad \psi(0)\in\pi^{-1}(\rho),
\end{equation}
is horizontal. We remind the reader that for every curve $\rho(t)$ in $\D(\sigma)$ and every initial value $\psi_0$ in the fiber over $\rho(0)$ there exists a unique horizontal curve $\psi(t)$ in $\S(\sigma)$ that extends from $\psi_0$ and is projected onto $\rho$, e.g. see \cite[p 69, Prop 3.1]{Kobayashi_etal1996I}. Furthermore, $\psi$ is the solution to \eqref{evolution_equation} for some, possibly time-dependent, Hamiltonian because $\U(\HH)$ acts transitively on $\S(\sigma)$. 

The locked inertia tensor can be used to measure deviation from parallelism:
Given a Hamiltonian $\obs{H}$ we define a $\u(\sigma)$-valued field $\xi_H$ on $\D(\sigma)$ by $\pi^*\xi_H=\A\circ X_{\obs{H}}$. 
Then $\xi_H\cdot\xi_H$ equals the square of the norm of the vertical part of $X_{\obs{H}}$. (Recall that $\cdot$ is the metric on $\u(\sigma)$ 
given by \eqref{beta}.) The field $\xi_H$ is intrinsic to the quantum system, and contains complete information 
about the expectation values of $\obs{H}$, c.f. \eqref{H} below.

The opposite of parallelism we call \emph{perpendicularity}. Thus, $\obs{H}$ is perpendicular at $\rho$ if $X_{\obs{H}}$ is vertical along the 
fiber over $\rho$, or equivalently, if $X_{\obs{H}}(\psi)=\psi\xi_H(\rho)$ for every lift $\psi$ of $\rho$. In this case $X_H(\rho)=0$. Note also that $\obs{H}$ is 
perpendicular at $\rho$ provided that $\rho$ represents a mixture of eigenstates of $\obs{H}$.

The precision to which the value of a Hamiltonian $\obs{H}$ can be known is quantified by its uncertainty function
$\Delta H(\rho)=\sqrt{\Tr(\obs{H}^2\rho)-\Tr(\obs{H}\rho)^2}$.
Let $\xi_H^\bot$ be the projection of $\xi_H$ on the 
orthogonal complement of the unit vector $-i\1$ in $\u(\sigma)$.
\begin{theorem}\label{qspeed}
The Hamiltonian vector field of $H$ satisfies 
\begin{equation}
\hbar^2 g(X_H ,X_H)=\Delta H^2 - \xi_H^\bot\cdot\xi_H^\bot.
\end{equation} 
In particular,
$\hbar^2 g(X_H(\rho),X_H(\rho))=\Delta H(\rho)^2$ if $\obs{H}$ is parallel at $\rho$.
\end{theorem}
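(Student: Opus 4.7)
The plan is to exploit that $\pi:\S(\sigma)\to\D(\sigma)$ is a Riemannian submersion. Since $X_H$ is the $d\pi$-projection of the canonical Schr\"odinger lift $X_{\obs{H}}(\psi)=\obs{H}\psi/(i\hbar)$, and only the horizontal part of $X_{\obs{H}}$ descends isometrically, for every $\psi$ in the fiber over $\rho$ one has
\begin{equation*}
g(X_H, X_H) = G(X_{\obs{H}}, X_{\obs{H}}) - G(X_{\obs{H}}^{\mathrm{ver}}, X_{\obs{H}}^{\mathrm{ver}}).
\end{equation*}
A direct computation from \eqref{Gustaf Otto}, using $\psi^\dagger\psi = P(\sigma)$ and $\Tr(\obs{H}^2\psi\psi^\dagger)=\Tr(\obs{H}^2\rho)$, yields $G(X_{\obs{H}}, X_{\obs{H}}) = \Tr(\obs{H}^2\rho)/\hbar^2$.

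For the vertical term, the mechanical connection produces $\xi_H(\rho) = \A_\psi(X_{\obs{H}}(\psi))$, and the vertical component of $X_{\obs{H}}$ at $\psi$ is the fundamental vector field $\psi\xi_H(\rho)$. By the very definition of the locked inertia tensor, $G(\psi\xi_H, \psi\xi_H) = \mathbb{I}_\psi\xi_H(\xi_H) = \xi_H\cdot\xi_H$, the squared length of $\xi_H$ in the bi-invariant metric \eqref{beta}.

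The final step is to peel off the $-i\1$ component of $\xi_H$. The operator $-i\1$ lies in $\u(\sigma)$ (it is anti-Hermitian and central), and from \eqref{beta} it is a unit vector since $(-i\1)\cdot(-i\1) = \Tr P(\sigma) = 1$. Writing $\xi_H = \xi_H^\bot + a(-i\1)$ orthogonally, the coefficient $a = \xi_H\cdot(-i\1)$ can be computed via $\mathbb{I}_\psi\xi_H = \mathbb{J}_\psi X_{\obs{H}}$ together with \eqref{Gustaf Otto}, giving $a = G(X_{\obs{H}}(\psi), -i\psi) = H(\rho)/\hbar$. Hence $\xi_H\cdot\xi_H = \xi_H^\bot\cdot\xi_H^\bot + H(\rho)^2/\hbar^2$, and substituting into the first identity reproduces the claim, because $\Tr(\obs{H}^2\rho) - H(\rho)^2 = \Delta H(\rho)^2$.

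For the last clause, parallelism of $\obs{H}$ at $\rho$ means $X_{\obs{H}}(\psi)$ is horizontal at any lift, so $\xi_H(\rho) = \A_\psi(X_{\obs{H}}) = 0$, and in particular $\xi_H^\bot(\rho) = 0$. I anticipate no substantive obstacle; the only delicate point is careful bookkeeping of imaginary units and $\hbar$ factors through the Hilbert-Schmidt pairings, together with the recognition that $-i\1$ is the distinguished direction in $\u(\sigma)$ whose inertia-metric component is precisely what converts the mean squared energy into the variance.
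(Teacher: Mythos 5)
Your proposal is correct and follows essentially the same route as the paper: both decompose $X_{\obs{H}}$ into horizontal and vertical parts, identify $\hbar^2 G(X_{\obs{H}},X_{\obs{H}})=\Tr(\obs{H}^2\rho)$ and the vertical norm with $\xi_H\cdot\xi_H$ via the locked inertia tensor, and use $H=\hbar\,(-i\1)\cdot\xi_H$ together with the fact that $-i\1$ is a unit vector to convert the mean square into the variance. Your computation of the $-i\1$-component through $\mathbb{I}_\psi\xi_H=\mathbb{J}_\psi X_{\obs{H}}$ is just a rephrasing of the paper's identity $\Tr(\obs{H}\rho)=i\hbar\Tr(\xi_H P(\sigma))$, so there is no substantive difference.
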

\begin{proof}
Let $\psi$ be a purification of $\rho$. Then
\begin{align}
&\Tr(\obs{H}\rho)
=i\hbar\Tr(\A_\psi(X_{\obs{H}}(\psi))P(\sigma))
=i\hbar\Tr(\xi_H(\rho)P(\sigma))
=\hbar(-i\1)\cdot\xi_H(\rho),\label{H}\\
&\Tr(\obs{H}^2\rho)
=\hbar^2G(X_{\obs{H}}(\psi),X_{\obs{H}}(\psi))
=\hbar^2 g(X_H(\rho),X_H(\rho))+\hbar^2\xi_H(\rho)\cdot\xi_H(\rho).\label{C}
\end{align}
It follows that 
\begin{equation}\label{covariance}
\Delta H^2
=\hbar^2\left(g(X_H,X_H)+\xi_H\cdot\xi_H\right)-H^2\\
=\hbar^2\left(g(X_H,X_H)+\xi_H^\bot\cdot\xi_H^\bot\right).
\end{equation}
In particular, $\hbar^2g(X_H(\rho),X_H(\rho))=\Delta H(\rho)^2$ if $\xi_H(\rho)=0$.
\end{proof}
\begin{example}
For pure states, the vertical bundle is $1$-dimensional. Therefore $\xi_H^\bot=0$. It follows that $\hbar^2g(X_H,X_H)=\Delta H^2$. This is consistent with the observations made in \cite{Anandan_etal1990}.
\end{example}
\begin{example}
There is a canonical procedure for creating a parallel Hamiltonian from a given one:
Suppose $\rho=\rho(t)$ is a solution to the von Neumann equation with Hamiltonian $\obs{H}$.
Let $\psi=\psi(t)$ be any solution to the Schr\"{o}dinger equation on $\S(\sigma)$ with Hamiltonian $\obs{H}$, and $\obs{H}_{||}$ be any Hamiltonian on $\HH$ which is such that $\obs{H}_{||}(t)\psi(t)=\obs{H}(t)\psi(t)-i\hbar\psi(t)\xi_H(\rho(t))$. (This uniquely defines $\obs{H}_{||}(t)$ on the image of $\psi(t)$.)
Then $\obs{H}_{||}$ parallel transports $\rho(0)$ along $\rho$ with the same speed as $\obs{H}$ because
$\xi_{H_{||}}(\rho)=0$ and $[\obs{H}_{||},\rho]=[\obs{H},\rho]$. Indeed, the solution to the Schr\"odinger equation of $\obs{H}_{||}$ which extends from $\psi(0)$ is the gauge-shift of $\psi$ into a horizontal curve
\begin{equation}
\psi_{||}(t)=\psi(t)\ptexp\left(-\int_0^t\A_\psi(\dot{\psi})\dt\right),
\end{equation}
see Figure \ref{parallel}.
Here $\ptexp$ is the positive time-ordered exponential.
\end{example}

\begin{figure}[htbp]
\centering
\includegraphics[width=0.70\textwidth,height=0.3\textheight]{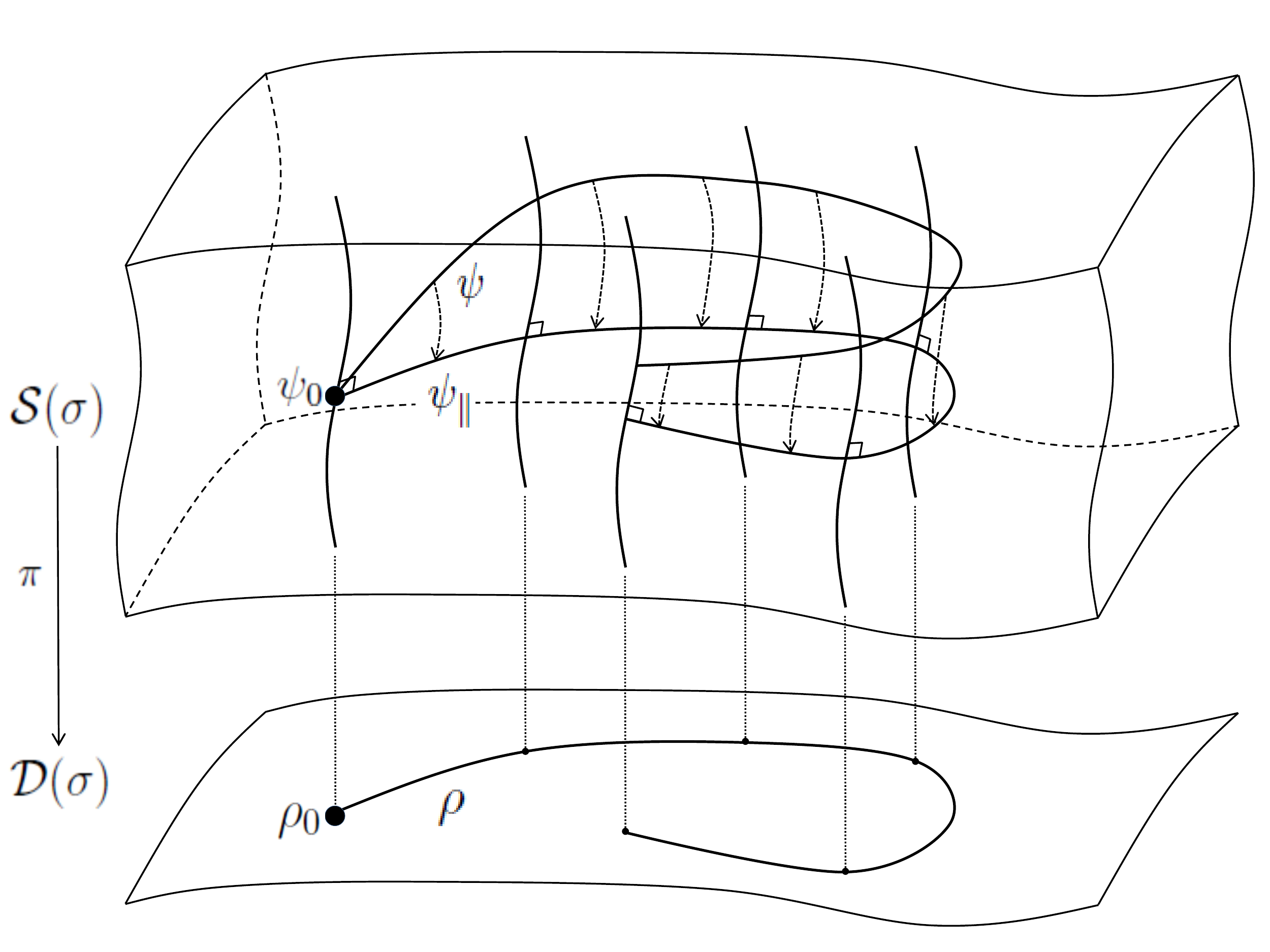}
\caption{A lift $\psi$ of an evolution curve $\rho$, and the shift of $\psi$ into a horizontal curve $\psi_{||}$.}
\label{parallel}
\end{figure}

\subsection{A Mandelstam-Tamm quantum speed limit}
The geodesic distance between two density operators with common spectrum $\sigma$ is defined as the infimum of the lengths of all curves in $\D(\sigma)$ that connect them. 
There is at least one curve whose length equals the distance, and all such curves are geodesics. Moreover, horizontal lifting of curves is length preserving because $\pi$ is a Riemannian submersion, and a curve in $\D(\sigma)$ is a geodesic if and only if one (hence all) of its horizontal lifts is a geodesic in $\S(\sigma)$, see \cite{Hermann1960}. The next theorem generalizes the main result of \cite{Anandan_etal1990}.

\begin{theorem}\label{uppskattning}
The distance between two isospectral density operators $\rho_0$ and $\rho_1$ is
\begin{equation}
\dist{\rho_0}{\rho_1}=\inf_{\obs H}\frac{1}{\hbar}\int_{0}^{\tau}\!\Delta H(\rho)\dt,\label{avstand}
\end{equation}
where the infimum is taken over all Hamiltonians $\obs H$ for which the boundary value von Neumann equation 
\begin{equation}
\dot\rho=X_H(\rho),\quad \rho(0)=\rho_0,~\rho(\tau)=\rho_1,\label{von Neumann}
\end{equation}
is solvable.
\end{theorem}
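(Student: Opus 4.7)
\medskip

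\noindent\textbf{Proof proposal.} The plan is to establish two matching inequalities: one via the pointwise bound of Theorem \ref{qspeed} applied to the von Neumann flow, and one via a curve-lifting construction that saturates it.

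First I would prove that $\dist{\rho_0}{\rho_1}$ bounds the right-hand side of \eqref{avstand} from below. Fix any Hamiltonian $\obs H$ for which \eqref{von Neumann} has a solution $\rho=\rho(t)$, so that $\dot\rho=X_H(\rho)$ and $\rho$ is a curve in $\D(\sigma)$ from $\rho_0$ to $\rho_1$. Theorem \ref{qspeed} yields the pointwise estimate $\hbar\sqrt{g(X_H(\rho),X_H(\rho))}\leq \Delta H(\rho)$, so integrating gives
\begin{equation*}
\frac{1}{\hbar}\int_0^\tau \Delta H(\rho)\dt \;\geq\; \int_0^\tau \sqrt{g(\dot\rho,\dot\rho)}\dt \;=\; \length{\rho} \;\geq\; \dist{\rho_0}{\rho_1}.
\end{equation*}
Taking the infimum over admissible $\obs H$ gives one of the two inequalities.

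For the reverse inequality, I would show that every smooth curve $\rho\colon[0,\tau]\to\D(\sigma)$ from $\rho_0$ to $\rho_1$ is the von Neumann evolution of a Hamiltonian that saturates the bound. Lift $\rho$ to a horizontal curve $\psi$ in $\S(\sigma)$ via the standard horizontal-lift theorem for principal connections, recalled just before Theorem \ref{qspeed}. Since $\U(\HH)$ acts transitively on $\S(\sigma)$, there is a smooth path of unitaries $U\colon[0,\tau]\to\U(\HH)$ with $U(0)=\1$ and $\psi(t)=U(t)\psi(0)$; set
\begin{equation*}
\obs H(t) = i\hbar\,\dot U(t)\, U(t)^\dagger,
\end{equation*}
which is self-adjoint because $\dot U U^\dagger$ is anti-self-adjoint. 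By construction $i\hbar\dot\psi=\obs H\psi$, so $X_{\obs H}(\psi)=\dot\psi$ is horizontal; equivalently, $\obs H$ is parallel along $\rho$. The parallel case of Theorem \ref{qspeed} then gives the pointwise \emph{equality} $\hbar\sqrt{g(\dot\rho,\dot\rho)} = \Delta H(\rho)$, whence
\begin{equation*}
\frac{1}{\hbar}\int_0^\tau \Delta H(\rho)\dt \;=\; \length{\rho}.
\end{equation*}
Infimizing over all such curves $\rho$ (which is legitimate because reparametrization onto $[0,\tau]$ preserves length) gives $\inf_{\obs H}\tfrac{1}{\hbar}\int_0^\tau \Delta H(\rho)\dt \leq \dist{\rho_0}{\rho_1}$, completing the proof.

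The main, and really only, technical point is the smooth existence of the unitary family $U(t)$ lifting $\psi$. This is exactly the transitivity of the $\U(\HH)$-action invoked by the authors immediately after \eqref{evolution_equation} and is standard for the Stiefel-type orbit $\S(\sigma)$; no machinery beyond what is already developed in Section \ref{red pur bund} is needed.
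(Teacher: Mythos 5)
Your proposal is correct and follows essentially the same route as the paper: the lower bound comes from integrating the pointwise inequality of Theorem \ref{qspeed}, and the upper bound from the existence (via transitivity of the $\U(\HH)$-action) of a Hamiltonian generating a horizontal lift, for which Theorem \ref{qspeed} gives equality. The only cosmetic difference is that the paper applies the equality case directly to a length-minimizing geodesic (whose existence it has already noted), whereas you infimize over all connecting curves and spell out the construction $\obs H=i\hbar\dot U U^\dagger$; both are fine.
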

\begin{proof}
The length of a curve $\rho=\rho(t)$ in $\D(\sigma)$ is 
\begin{equation}
\length{\rho}=
\int_{0}^{\tau}\!\sqrt{g(\dot\rho,\dot\rho)}\dt.\label{length}
\end{equation}
Theorem \ref{qspeed} tells us that if $\rho$ is the integral curve of $X_H$ for \emph{some} Hamiltonian $\obs H$, the length of $\rho$ is a lower bound for the   
\emph{energy dispersion integral} along $\rho$: 
\begin{equation}
\length{\rho}\leq\frac{1}{\hbar}\int_{0}^{\tau}\!\Delta H(\rho)\dt.
\label{enekvation}
\end{equation}
There is a Hamiltonian $\obs H$ that 
generates a horizontal lift of $\rho$ because the unitary group of $\HH$ acts transitively on $\S(\sigma)$.
For such a Hamiltonian we have equality in \eqref{enekvation} by Theorem \ref{qspeed}. Moreover, if $\rho$ is a shortest geodesic, then
\begin{equation}
\dist{\rho_0}{\rho_1}=\frac{1}{\hbar}\int_{0}^{\tau}\!\Delta H(\rho)\dt.
\label{finalen}
\end{equation}
This proves \eqref{avstand}. 
\end{proof}

Aharonov and Bohm's \cite{Aharonov_etal1961} interpretation of the classic Mandelstam-Tamm time-energy uncertainty relation gives rise to a limit on the speed of dynamical evolution \cite{Bhattacharyya1983}.
For systems prepared in pure states it implies that the time it takes for a state to evolve to an orthogonal state is 
bounded from below by $\pi\hbar/2$ times the inverse of the average energy uncertainty of the system.
Uhlmann \cite{Uhlmann1992Energy} showed that the same inequality holds for mixed states when 
orthogonality is replaced by full distinguishability, which means that the fidelity of the states vanishes \cite{Englert1996,Markham_etal2008}.
To be precise, Uhlmann showed that if $\rho$ is a solution to \eqref{von Neumann},
then the Bures length between the initial and final state,
\begin{equation}
\angleB{\rho_0}{\rho_1}=\arccos\sqrt{F(\rho_0,\rho_1)},\quad F(\rho_0,\rho_1)=\left(\Tr\sqrt{\sqrt{\rho_0}\rho_1\sqrt{\rho_0}}\right)^2,
\end{equation}
is bounded from above by the energy dispersion integral:
\begin{equation}
\angleB{\rho_0}{\rho_1}\leq\frac{1}{\hbar}\int_{0}^{\tau}\Delta H(\rho)\dt.
\end{equation}
Consequently, the evolution time is bounded from below by 
$\hbar/ \Delta E$ times the Bures length:
\begin{equation}
\tau\geq\frac{\hbar}{\Delta E}\angleB{\rho_0}{\rho_1},
\qquad \Delta E=\frac{1}{\tau}\int_{0}^{\tau}\Delta H(\rho)\dt.
\end{equation}
If we combine Theorem \ref{uppskattning} with the findings of Uhlmann 
we see that 
\begin{equation}\label{estimat}
\tau\geq\frac{\hbar}{\Delta E}\dist{\rho_0}{\rho_1}\geq \frac{\hbar}{\Delta E}\angleB{\rho_0}{\rho_1}.
\end{equation}
In Section \ref{comparison} we will construct examples of density operators for which the second inequality in \eqref{estimat} is strict. Thus the quantum speed limit given by the middle term in \eqref{estimat} is sometimes greater than the quantum speed limit derived by Uhlmann. However, for fully distinguishable states they are the same: 

\begin{proposition}\label{distinkt}
If $\rho_0$ and $\rho_1$ in $\S(\sigma)$ are fully distinguishable, then 
\begin{equation}
\dist{\rho_0}{\rho_1}=\angleB{\rho_0}{\rho_1}=\pi/2.
\end{equation}
\end{proposition}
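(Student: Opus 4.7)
The plan is to split the claim into the Bures-length part and the geodesic-distance part, which have quite different flavors. The Bures part is immediate: full distinguishability means $F(\rho_0,\rho_1)=0$, so by definition $\angleB{\rho_0}{\rho_1}=\arccos 0=\pi/2$. The lower bound $\dist{\rho_0}{\rho_1}\geq \pi/2$ then drops out of the sandwich \eqref{estimat}, which already encodes the inequality $\dist\geq\angleB$. So the substantive task is to exhibit a curve in $\D(\sigma)$ of length at most $\pi/2$ joining $\rho_0$ to $\rho_1$.

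The key structural fact is that $F(\rho_0,\rho_1)=0$ forces $\rho_0$ and $\rho_1$ to have orthogonal supports in $\HH$: since $\sqrt{\rho_0}\rho_1\sqrt{\rho_0}$ is positive with vanishing trace, it equals zero, whence $\rho_0^{1/2}\rho_1^{1/2}=0$ and the supports are orthogonal. Now pick any purifications $\psi_0,\psi_1\in\S(\sigma)$ with $\pi(\psi_j)=\rho_j$. Because $\psi_j^\dagger\psi_j=P(\sigma)$ is invertible on $\KK$, the map $\psi_j$ is injective and its range coincides with the support of $\rho_j$. Orthogonality of the supports in $\HH$ therefore upgrades to the operator identity $\psi_0^\dagger\psi_1=0$ (and its adjoint $\psi_1^\dagger\psi_0=0$). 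This is strictly stronger than mere Hilbert--Schmidt orthogonality of $\psi_0$ and $\psi_1$, and it is exactly what is needed to keep the interpolating curve inside $\S(\sigma)$.

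With the purifications in hand, I would consider the great-circle arc
\begin{equation*}
\psi(t)=\cos(t)\,\psi_0+\sin(t)\,\psi_1,\qquad t\in[0,\pi/2].
\end{equation*}
Expanding $\psi(t)^\dagger\psi(t)$ using $\psi_i^\dagger\psi_j=\delta_{ij}P(\sigma)$ gives $\psi(t)^\dagger\psi(t)=P(\sigma)$, so the curve lies in $\S(\sigma)$ and runs from $\psi_0$ to $\psi_1$. The same orthogonalities yield $G(\dot\psi,\dot\psi)\equiv 1$ and, more strongly, $\dot\psi^\dagger\psi(t)=0$; the latter shows $\dot\psi$ is $G$-orthogonal to every fundamental vertical vector field $\hat\eta(\psi)=\psi\eta$, so $\psi$ is horizontal. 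Since $\pi$ is a Riemannian submersion and horizontal lifting preserves length, $\pi\circ\psi$ is a curve of length exactly $\pi/2$ in $\D(\sigma)$ from $\rho_0$ to $\rho_1$, giving $\dist{\rho_0}{\rho_1}\leq\pi/2$ and closing the argument. The main obstacle is isolating the right notion of orthogonality at the level of purifications; once the identity $\psi_0^\dagger\psi_1=0$ is in place, every remaining step is a short computation.
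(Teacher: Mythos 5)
Your proof is correct and, for the substantive half, essentially identical to the paper's: both hinge on the fact that full distinguishability forces $\psi_0^\dagger\psi_1=0$ for purifications (Lemma \ref{orthogonal supports}; you even supply the short argument via $\sqrt{\rho_0}\rho_1\sqrt{\rho_0}=0$ that the paper delegates to references), and both then check that the great-circle arc $\cos(t)\psi_0+\sin(t)\psi_1$ stays in $\S(\sigma)$, is horizontal and unit speed, so its projection realizes length $\pi/2$. The only divergence is the lower bound $\dist{\rho_0}{\rho_1}\geq\pi/2$: you import it from the chain \eqref{estimat}, i.e.\ from Uhlmann's inequality combined with Theorem \ref{uppskattning}, whereas the paper argues directly that any two purifications are Hilbert--Schmidt orthogonal unit vectors, hence at distance $\pi/2$ in the ambient sphere $\S(\KK,\HH)$, which bounds the induced distance in $\S(\sigma)$ from below. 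Both routes are valid and non-circular, since \eqref{estimat} is established before the proposition; the paper's version has the minor advantage of keeping the proposition self-contained and independent of Uhlmann's result.
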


\begin{lemma}\label{orthogonal supports}
Purifications of fully distinguishable mixed states have orthogonal supports, and hence they are Hilbert-Schmidt orthogonal.
\end{lemma}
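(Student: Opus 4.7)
The hypothesis that $\rho_0$ and $\rho_1$ are fully distinguishable unpacks to $F(\rho_0,\rho_1)=0$, i.e.\ $\Tr\sqrt{\sqrt{\rho_0}\,\rho_1\sqrt{\rho_0}}=0$. The operator under the trace is positive semidefinite, so the vanishing of its trace forces it to be the zero operator; hence $\sqrt{\rho_0}\,\rho_1\sqrt{\rho_0}=0$. Writing the middle factor as $\sqrt{\rho_1}\sqrt{\rho_1}$ re-expresses this as $(\sqrt{\rho_1}\sqrt{\rho_0})^\dagger(\sqrt{\rho_1}\sqrt{\rho_0})=0$, so $\sqrt{\rho_1}\sqrt{\rho_0}=0$. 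Consequently the range of $\sqrt{\rho_0}$, which equals the support of $\rho_0$, is contained in the kernel of $\sqrt{\rho_1}$, which equals $(\mathrm{supp}\,\rho_1)^\perp$. The supports of $\rho_0$ and $\rho_1$ are therefore mutually orthogonal subspaces of $\HH$.

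To transfer the information from the density operators to their purifications, I would use the standard identity $\Ker\psi_j^\dagger=\Ker(\psi_j\psi_j^\dagger)=(\mathrm{supp}\,\rho_j)^\perp$ for any purification $\psi_j\in\S(\sigma)$ of $\rho_j$; this says that $\mathrm{range}(\psi_j)=\mathrm{supp}\,\rho_j$. The first claim of the lemma then follows immediately from the previous paragraph. From $\langle\psi_0 v,\psi_1 w\rangle_\HH=0$ for all $v,w\in\KK$ I get the operator identity $\psi_1^\dagger\psi_0=0$ on $\KK$, and taking the trace yields $\Tr(\psi_0^\dagger\psi_1)=0$, which is precisely Hilbert-Schmidt orthogonality in $\L(\KK,\HH)$.

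The only step that is not purely mechanical is the identification of $\sqrt{\rho_0}\,\rho_1\sqrt{\rho_0}$ with the positive operator $(\sqrt{\rho_1}\sqrt{\rho_0})^\dagger(\sqrt{\rho_1}\sqrt{\rho_0})$; once this rewriting is made, the standard implication ``$A^\dagger A=0\Rightarrow A=0$'' closes the argument. Everything downstream—the identification of the range of $\psi_j$ with the support of $\rho_j$, and the passage from orthogonal ranges to $\psi_1^\dagger\psi_0=0$—is routine.
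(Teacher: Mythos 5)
Your proof is correct and follows the same overall route as the paper's: orthogonality of the supports of $\rho_0$ and $\rho_1$ is transferred to the purifications via the observation that the (closure of the) range of $\psi_j$ equals the support of $\rho_j=\psi_j\psi_j^\dagger$, and the resulting operator identity $\psi_0^\dagger\psi_1=0$ gives Hilbert--Schmidt orthogonality. The one genuine difference is in the first step: the paper simply cites Englert and Markham \emph{et al.} for the implication that fully distinguishable states have orthogonal supports, whereas you derive it from scratch from $F(\rho_0,\rho_1)=0$ via the chain
\begin{equation*}
\Tr\sqrt{\sqrt{\rho_0}\,\rho_1\sqrt{\rho_0}}=0
\;\Rightarrow\;
\sqrt{\rho_0}\,\rho_1\sqrt{\rho_0}=0
\;\Rightarrow\;
\bigl(\sqrt{\rho_1}\sqrt{\rho_0}\bigr)^\dagger\bigl(\sqrt{\rho_1}\sqrt{\rho_0}\bigr)=0
\;\Rightarrow\;
\sqrt{\rho_1}\sqrt{\rho_0}=0.
\end{equation*}
Each link is sound (a positive operator with vanishing trace is zero, and $A^\dagger A=0$ forces $A=0$), and since the density operators here have finite-dimensional support there are no closure subtleties in identifying ranges with supports. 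What your version buys is a self-contained lemma that does not lean on external references; what the paper's version buys is brevity. Either is acceptable.
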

\begin{proof}
Let $\psi_0$ and $\psi_1$ be purifications of $\rho_0$ and $\rho_1$, 
and assume that $\rho_0$ and $\rho_1$ represent two fully distinguishable mixed states.
Then $\rho_0$ and $\rho_1$ have orthogonal supports, see \cite{Englert1996} and \cite[Theorem 1]{Markham_etal2008}, and 
the same is true for $\psi_0$ and $\psi_1$ because the support of $\psi_0$ equals the support of $\rho_0$, and likewise for $\psi_1$ and $\rho_1$. A compact way to express this is
$\psi_0^\dagger\psi_1=0$.
\end{proof}
\begin{proof}[Proof of Proposition \ref{distinkt}]
Let $\psi_0$ in $\pi^{-1}(\rho_0)$ and $\psi_1$ in $\pi^{-1}(\rho_1)$ be
such that the $G|_{\S(\sigma)}$-geodesic distance between them equals $\dist{\rho_0}{\rho_1}$.
If we consider 
$\psi_0$ and $\psi_1$ as elements in $\S(\KK,\HH)$, they are a distance of $\pi/2$ apart. In fact, $\psi(t)=\cos(t)\psi_0+\sin(t)\psi_1$, $0\leq t\leq \pi/2$, is a length minimizing unit speed curve from $\psi_0$ to $\psi_1$.
Consequently, $\dist{\rho_0}{\rho_1}\geq \pi/2$. However,
direct computations yield $\psi^\dagger\psi=P(\sigma)$ and $\psi^\dagger\dot\psi=0$. Thus $\psi$ is a \emph{horizontal} curve \emph{in} $\S(\sigma)$. We conclude that $\dist{\rho_0}{\rho_1}=\pi/2$.  
\end{proof}

\section{An inequality of Margolus-Levitin type}\label{margolus}
By Theorem \ref{qspeed}, the {\sc MT} quantum speed limit derived in the previous section has a geometric origin. 
This limit does not depend on the energy of the system because 
two Hamiltonians with different energies may have the same uncertainties and solution spaces for their von Neumann equations:
\begin{proposition}\label{energiinv}
Suppose $\obs{H}$ is a Hamiltonian on $\HH$. Let $E=E(t)$ be any function and define $\obs{K}=\obs{H}-E\1$. Then
 $\Delta K=\Delta H$, $X_K=X_H$, and $\xi_K^\bot=\xi_H^\bot$. But $K=H-E$. \qed
\end{proposition}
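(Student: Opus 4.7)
The plan is to verify each of the four claims by direct computation, handling $\Delta K=\Delta H$ and $X_K=X_H$ in a couple of lines and then focusing attention on the only nonobvious equality, $\xi_K^\bot=\xi_H^\bot$.

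First I would expand $\obs K=\obs H-E\1$ inside the defining formulas. For the uncertainty, $\Tr(\obs K\rho)=H(\rho)-E$ and $\Tr(\obs K^2\rho)=\Tr(\obs H^2\rho)-2EH(\rho)+E^2$, so the $E$-dependent terms cancel in $\Tr(\obs K^2\rho)-\Tr(\obs K\rho)^2$, leaving $\Delta K^2=\Delta H^2$. For the Hamiltonian vector field, $X_K(\rho)=[\rho,\obs K]/i\hbar=[\rho,\obs H]/i\hbar-(E/i\hbar)[\rho,\1]=X_H(\rho)$ since $\1$ is central. And $K(\rho)=\Tr(\obs K\rho)=H(\rho)-E$ is immediate.

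The main point is $\xi_K^\bot=\xi_H^\bot$. The lifted fields satisfy $X_{\obs K}(\psi)-X_{\obs H}(\psi)=-(E/i\hbar)\psi=(iE/\hbar)\psi$, so on $\S(\sigma)$ I would compute
\begin{equation}
\A_\psi\bigl(X_{\obs K}(\psi)-X_{\obs H}(\psi)\bigr)=\frac{iE}{\hbar}\A_\psi(\psi).
\end{equation}
Using the explicit connection formula \eqref{connection} and the fact that $\psi^\dagger\psi=P(\sigma)$ commutes with each spectral projection $\Pi_j$, one finds
\begin{equation}
\A_\psi(\psi)=\sum_{j=1}^l\Pi_j\, P(\sigma)\,\Pi_j\,P(\sigma)^{-1}=P(\sigma)\,P(\sigma)^{-1}=\1.
\end{equation}
Pulling this back through $\pi$ gives $\xi_K-\xi_H=(iE/\hbar)\1=-(E/\hbar)(-i\1)$, a real scalar multiple of the unit vector $-i\1\in\u(\sigma)$. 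Hence $\xi_K$ and $\xi_H$ agree after projecting onto the orthogonal complement of $-i\1$, proving $\xi_K^\bot=\xi_H^\bot$.

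The only step with any conceptual content is the identification $\A_\psi(\psi)=\1$, and I expect this to be the main (mild) obstacle: one has to notice that although $\psi$ is being fed into $\A_\psi$ as a tangent vector, the explicit formula \eqref{connection} still makes sense, and the commutation $[\Pi_j,P(\sigma)]=0$ collapses the sum to the identity. Everything else is a one-line manipulation using the centrality of $\1$ and the linearity of trace.
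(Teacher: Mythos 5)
Your proposal is correct. The paper itself offers no argument for this proposition (the \qed after the statement signals that the authors regard it as immediate), so your computation simply fills in what is left implicit, and it does so along the only natural route: the first three identities are one-line cancellations using centrality of $\1$, and the substantive point is that $\xi_K-\xi_H$ is a real multiple of the unit vector $-i\1$, which your evaluation $\A_\psi(\psi)=\1$ via \eqref{connection} establishes correctly. One small polish: rather than extending \eqref{connection} to the non-tangent vector $\psi$, you can observe that $X_{\obs K}(\psi)-X_{\obs H}(\psi)=(iE/\hbar)\psi=\hat\eta(\psi)$ with $\eta=(iE/\hbar)\1\in\u(\sigma)$ a \emph{vertical} fundamental vector, and a principal connection reproduces its generator, $\A_\psi(\hat\eta(\psi))=\eta$; this gives $\xi_K-\xi_H=-(E/\hbar)(-i\1)$ without any abuse of the formula, and is also consistent with \eqref{H}, which identifies the $-i\1$-component of $\xi_H$ with the mean energy.
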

\noindent However, there is also a dynamical quantum speed limit which \emph{does} depend on the energy of the evolving system.
Margolus and Levitin \cite{Margolus_etal1998} showed that, when the evolution is governed by a time-independent Hamiltonian $\obs{H}$,
the time it takes for a system to evolve from one pure state into an orthogonal one is never less than $\pi\hbar/2(H-E_0)$, where $E_0$ is the ground state energy. The same is true for an evolution between fully distinguishable mixed states because according to Lemma \ref{orthogonal supports}, their purifications are orthogonal.
Next, we show that a similar inequality holds for a driven quantum system when $\obs{H}(s)$ and $\obs{H}(t)$ commute for $0\leq s,t\leq \tau$.
To this end, we construct a time-averaged Hamiltonian $\bar{H}$ with the same eigenspaces as $\obs{H}$ but whose eigenvalues are averages of the ground state energy shifted eigenvalues of $\obs{H}$. Thus, if
$\obs{H}(t)=\sum_nE_n(t)\ketbra{n(t)}{n(t)}$ is a continuously varying family of instantaneous spectral decompositions of $\obs{H}$,
where $\{\ket{n(t)}\}$ is an orthonormal eigenframe for $\obs{H}(t)$,
we define 
\begin{equation}
\bar{H}(t)=\sum_n\bar{E}_n(t)\ketbra{n(t)}{n(t)},\qquad \bar{E}_n(t)=\frac{1}{t}\int_0^t\left(E_n(t)-E_0(t)\right)\dt.
\end{equation}
Then we have the following generalization of the Margolus and Levitin estimate.

\begin{theorem}\label{ML}
Suppose $\rho_0$ and $\rho_1$ are the initial and final states, respectively, of an evolution $\rho$ governed by a Hamiltonian $\obs{H}$
such that $\obs{H}(s)$ and $\obs{H}(t)$ commute for $0\leq s, t\leq \tau$. 
If $\rho_0$ and $\rho_1$ are fully distinguishable, then
\begin{equation}\label{eqML}
\tau\geq \frac{\pi\hbar}{2\bar{E}},\qquad \bar{E}=\Tr(\bar{H}(\tau)\rho_0)=\Tr(\bar{H}(\tau)\rho_1).
\end{equation}
\end{theorem}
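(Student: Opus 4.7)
The plan is to exploit the commutativity to reduce the driven evolution to a time-independent one, then to run the classical Margolus--Levitin cosine argument on the lift of the evolution to $\S(\sigma)$. Because $[\obs{H}(s),\obs{H}(t)]=0$ for all $s,t\in[0,\tau]$, the family $\{\obs{H}(t)\}$ can be simultaneously diagonalized in a \emph{time-independent} orthonormal basis $\{\ket{n}\}$ of $\HH$ with $\obs{H}(t)\ket{n}=E_n(t)\ket{n}$; the continuous frame $\{\ket{n(t)}\}$ in the statement is simply such a fixed basis, possibly refined inside degenerate eigenspaces. The time-ordered propagator then collapses to the ordinary exponential
\begin{equation*}
U(\tau)=\exp\!\Bigl(-\tfrac{i}{\hbar}\int_{0}^{\tau}\obs{H}(s)\ds\Bigr)=\sum_{n}e^{-i\phi_n}\ketbra{n}{n},\qquad \phi_n=\tfrac{1}{\hbar}\int_{0}^{\tau}\!E_n(s)\ds,
\end{equation*}
and pulling out the scalar phase $e^{-iA}$ with $A=\tfrac{1}{\hbar}\int_{0}^{\tau}E_0(s)\ds$ yields $U(\tau)=e^{-iA}\exp(-i\tau\bar{H}(\tau)/\hbar)$, where $\bar{H}(\tau)=\sum_{n}\bar{E}_n(\tau)\ketbra{n}{n}$ has non-negative eigenvalues $\bar{E}_n(\tau)\ge 0$. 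Thus the evolution is, up to an unobservable global phase, time-independent with generator $\bar{H}(\tau)$.

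Next I would lift to purifications. Fix any $\psi_0\in\pi^{-1}(\rho_0)$ and set $\psi_1=U(\tau)\psi_0\in\pi^{-1}(\rho_1)$. Full distinguishability together with Lemma~\ref{orthogonal supports} forces $\Tr(\psi_0^\dagger\psi_1)=0$. Writing $\psi_0=\sum_{n}\ket{n}\bra{v_n}$ in the chosen basis, the constraint $\psi_0^\dagger\psi_0=P(\sigma)$ yields $p_n:=\|v_n\|^2=\bra{n}\rho_0\ket{n}$, so $\sum_n p_n=1$ and $\bar{E}=\sum_{n}p_n\bar{E}_n(\tau)$. A short calculation gives
\begin{equation*}
0=\Tr(\psi_0^\dagger\psi_1)=e^{-iA}\sum_{n}p_n\,e^{-i\tau\bar{E}_n(\tau)/\hbar},
\end{equation*}
so both the real and imaginary parts of $\sum_{n}p_n\,e^{-i\tau\bar{E}_n(\tau)/\hbar}$ vanish.

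The final step is Margolus and Levitin's trigonometric trick. The elementary inequality $\cos x\ge 1-\tfrac{2}{\pi}(x+\sin x)$, valid for every $x\ge 0$, applies termwise because $\bar{E}_n(\tau)\ge 0$. Averaging with weights $p_n$ and absorbing the vanishing imaginary part into the sine sum gives
\begin{equation*}
0=\sum_{n}p_n\cos\!\bigl(\tau\bar{E}_n(\tau)/\hbar\bigr)\ \ge\ 1-\tfrac{2\tau}{\pi\hbar}\sum_{n}p_n\bar{E}_n(\tau)=1-\tfrac{2\tau\bar{E}}{\pi\hbar},
\end{equation*}
which rearranges to $\tau\ge\pi\hbar/(2\bar{E})$. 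The two expressions for $\bar{E}$ in \eqref{eqML} coincide because $\bar{H}(\tau)$ and $U(\tau)$ are both diagonal in $\{\ket{n}\}$, so $[\bar{H}(\tau),U(\tau)]=0$ and cyclicity of the trace yields $\Tr(\bar{H}(\tau)\rho_0)=\Tr(\bar{H}(\tau)\rho_1)$.

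The main obstacle I anticipate is the bookkeeping of the first step: making precise that the commuting family $\{\obs{H}(t)\}$ admits a time-independent common eigenframe in which the integrated phases $\phi_n$ and the shifted energies $\bar{E}_n(\tau)$ coincide with the objects defined in the statement. Once this identification is in place, everything downstream is routine: the calculation collapses to the classical Margolus--Levitin inequality applied to the effective time-independent Hamiltonian $\bar{H}(\tau)$, with Lemma~\ref{orthogonal supports} providing the bridge from full distinguishability of mixed states to Hilbert--Schmidt orthogonality of their purifications.
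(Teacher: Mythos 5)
Your proposal is correct and follows essentially the same route as the paper: purify $\rho_0$, evolve under the ground-state-shifted Hamiltonian, invoke Lemma~\ref{orthogonal supports} to get $\Tr\psi_0^\dagger\psi_1=0$, and apply the Margolus--Levitin inequality $\cos x\geq 1-\tfrac{2}{\pi}(x+\sin x)$ termwise to the overlap. The only difference is presentational -- you make the time-independent common eigenframe and the factored global phase explicit, and use the vanishing of both real and imaginary parts of the overlap at the outset, whereas the paper carries the $\Im\Tr\psi_0^\dagger\psi_1$ term through the estimate and imposes orthogonality at the end.
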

\begin{proof}
Suppose $\psi_0$ is a purification of $\rho_0$, and let $\psi=\psi(t)$ be the solution to the Schr\"{o}dinger equation with Hamiltonian $\obs{H}-E_0\1$ which extends from $\psi_0$. Then $\psi$ is a lift of $\rho$. We set $\psi_1=\psi(\tau)$.
Like Margolus and Levitin we use the inequality $\cos x\geq 2(x+\sin x)/\pi$ for $x\geq 0$ to estimate the real part of the inner product of $\psi_0$ and $\psi_1$: 
\begin{equation}\label{ett}
\begin{split}
\Re\Tr \psi_0^\dagger \psi_1
&=\sum_{n} \Re\bra{n(\tau)}\psi_1\psi_0^\dagger\ket{n(\tau)}\\
&=\sum_{n} \bra{n(\tau)}\rho_0\ket{n(\tau)}\cos\left(\frac{\tau}{\hbar}\bar{E}_n(\tau)\right)\\
&\geq \sum_{n} \bra{n(\tau)}\rho_0\ket{n(\tau)}\left(1-\frac{2}{\pi}\left(\frac{\tau}{\hbar}\bar{E}_n(\tau)+\sin \left(\frac{\tau}{\hbar}\bar{E}_n(\tau)\right)\right)\right)\\
&=1-\frac{2\tau}{\pi\hbar}\sum_{n} \bra{n(\tau)}\rho_0\ket{n(\tau)}\bar{E}_n(\tau)+\frac{2}{\pi}\Im\Tr \psi_0^\dagger \psi_1.
\end{split}
\end{equation}
Moreover,
\begin{equation}\label{tva}
\Tr(\bar{H}(\tau)\rho_0)=\sum_n \bra{n(\tau)}\rho_0\ket{n(\tau)}\bar{E}_n(\tau)
=\sum_n \bra{n(\tau)}\rho_1\ket{n(\tau)}\bar{E}_n(\tau)=\Tr(\bar{H}(\tau)\rho_1)
\end{equation}
because $ \bra{n(\tau)}\rho_0\ket{n(\tau)}=\bra{n(\tau)}\rho_1\ket{n(\tau)}$ for every $n$. Equations \eqref{ett} and \eqref{tva} yield
\begin{equation}
\Re\braket{\psi_0}{\psi_1}\geq 1-\frac{2\bar{E}}{\pi\hbar}\tau+\frac{2}{\pi}\Im\braket{\psi_0}{\psi_1},\qquad \bar{E}=\Tr(\bar{H}(\tau)\rho_0)=\Tr(\bar{H}(\tau)\rho_1).
\end{equation}
Now, by Lemma \ref{orthogonal supports}, $\psi_0$ and $\psi_1$ are orthogonal if $\rho_0$ and $\rho_1$ are fully distinguishable. Then, 
\begin{equation}
\tau\geq\frac{\pi\hbar}{2\bar{E}}.
\end{equation}
\end{proof}

\subsection{Margolus-Levitin quantum speed limit}
Margolus and Levitin's estimate has been generalized to arbitrary pairs of isospectral mixed states \cite{Giovannetti_etal2003(2)}. Recently, Deffner and Lutz \cite{Deffner_etal2013} proved 
that if $\rho_0$ and $\rho_1$ are the initial and final states of a solution curve $\rho$ to a von Neumann equation with 
Hamiltonian $\obs{H}$ such that $\obs{H}(s)$ and $\obs{H}(t)$ commute for $0\leq s,t\leq \tau$, then
\begin{equation}\label{DL-estimate}
\tau\geq \frac{4\hbar}{\pi^2\langle H-E_0\rangle}\angleB{\rho_0}{\rho_1}^2,
\qquad 
\langle H-E_0\rangle=\frac{1}{\tau}\int_0^\tau \left(\Tr(\obs{H}\rho)-E_0\right)\dt.
\end{equation}
We show how a minor modification of the proof of Theorem \ref{ML} gives rise to a similar estimate, which actually provides a greater lower bound on the evolution time of systems with time-independent Hamiltonians than \eqref{DL-estimate}.
\begin{theorem}\label{DL-estimatet}
Suppose $\rho_0$ and $\rho_1$ are the initial and final states, respectively, of an evolution governed by a Hamiltonian $\obs{H}$ such that $\obs{H}(s)$ and $\obs{H}(t)$ commute for $0\leq s,t\leq \tau$.
Let $\beta\approx 0.724$ be such that $1-\beta x$ is a tangent line to $\cos x$, see Figure \ref{subfig-1}. Then
\begin{equation}
\tau\geq \frac{4\hbar}{\beta\pi^2\bar{E}} \angleB{\rho_0}{\rho_1}^2,\qquad \bar{E}=\Tr(\bar{H}(\tau)\rho_0)=\Tr(\bar{H}(\tau)\rho_1).
\end{equation}
\end{theorem}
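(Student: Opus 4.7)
The plan is to mimic the proof of Theorem \ref{ML} as closely as possible, swapping Margolus and Levitin's chord inequality $\cos x \geq 2(x+\sin x)/\pi$ for the simpler tangent-line bound $\cos x \geq 1-\beta x$. The latter holds for all $x\geq 0$ by the defining property of $\beta$: the line $y = 1-\beta x$ is the tangent to $y=\cos x$ drawn from the point $(0,1)$ to the cosine curve in its first convex region, touching it at the unique $x_0\in(\pi/2,\pi)$ satisfying $\sin x_0 = \beta$ and $\cos x_0 + x_0\sin x_0 = 1$. Checking that the auxiliary function $\cos x - 1 + \beta x$ is non-negative on $[0,\infty)$ is then an easy calculus exercise. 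Because full distinguishability is no longer assumed, the resulting bound on $\Re\Tr\psi_0^\dagger\psi_1$ will be promoted not to strict orthogonality but to a bound on the Bures angle, and this is what introduces the factor $\angleB{\rho_0}{\rho_1}^2$.

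First I would repeat the setup of the proof of Theorem \ref{ML} verbatim: choose a purification $\psi_0$ of $\rho_0$, let $\psi=\psi(t)$ solve the Schr\"odinger equation with Hamiltonian $\obs{H}-E_0\1$ starting at $\psi_0$, set $\psi_1=\psi(\tau)$, and expand $\Re\Tr\psi_0^\dagger\psi_1$ in the (time-independent, by commutativity of the $\obs H(t)$) eigenframe of $\obs H$. Applying $\cos x \geq 1-\beta x$ termwise and invoking the identity \eqref{tva} yields
\[
\Re\Tr\psi_0^\dagger\psi_1 \;\geq\; 1 - \frac{\beta\tau\bar E}{\hbar}.
\]

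Next I would relate the left side to the Bures angle via Uhlmann's theorem, which gives $\cos\angleB{\rho_0}{\rho_1}=\sqrt{F(\rho_0,\rho_1)} \geq |\Tr\psi_0^\dagger\psi_1|\geq \Re\Tr\psi_0^\dagger\psi_1$, so $\cos\angleB{\rho_0}{\rho_1}\geq 1-\beta\tau\bar E/\hbar$. The final ingredient is the elementary chord estimate $\cos\theta \leq 1-4\theta^2/\pi^2$ for $\theta\in[0,\pi/2]$, which I would verify by noting that the difference vanishes at both endpoints, has vanishing derivative at $0$, and has a second derivative that changes sign exactly once on the interval. Chaining the two bounds and rearranging gives $4\angleB{\rho_0}{\rho_1}^2/\pi^2 \leq \beta\tau\bar E/\hbar$, which is the desired estimate.

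The only subtle point is the case $\beta\tau\bar E/\hbar > 1$, where the lower bound for $\cos\angleB{\rho_0}{\rho_1}$ is negative and the chain above is vacuous. I would dispose of it separately by observing that $\angleB{\rho_0}{\rho_1}\leq\pi/2$ forces $4\angleB{\rho_0}{\rho_1}^2/\pi^2 \leq 1 < \beta\tau\bar E/\hbar$, so the claim still holds. I do not foresee any serious obstacle; all three ingredients—the tangent-line inequality, Uhlmann's theorem, and the chord bound on cosine—are standard, and the modification of the proof of Theorem \ref{ML} is essentially mechanical.
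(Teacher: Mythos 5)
Your proposal is correct and follows essentially the same route as the paper: the tangent-line bound $\cos x\geq 1-\beta x$ applied termwise in the common eigenframe, Uhlmann's theorem to compare $|\Tr\psi_0^\dagger\psi_1|$ with $\sqrt{F(\rho_0,\rho_1)}$, and the chord estimate $1-x\geq 4\arccos^2x/\pi^2$ (your $\cos\theta\leq 1-4\theta^2/\pi^2$ is the same inequality after substituting $x=\cos\theta$). The only difference is cosmetic ordering of the three steps, and your separate treatment of the case $\beta\tau\bar{E}/\hbar>1$ is harmless but unnecessary, since the two-sided squeeze on $\cos\angleB{\rho_0}{\rho_1}$ already yields the conclusion regardless of the sign of $1-\beta\tau\bar{E}/\hbar$.
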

\begin{remark}
$\bar{E}=\langle H - E_0\rangle=H-E_0$ if $\obs{H}$ is time-independent.
\end{remark}
\begin{figure}[htbp]
\centering
    \subfloat[Graphs of $\cos x$ (solid) and $1-\beta x$ (dashed).\label{subfig-1}]{%
      \includegraphics[width=0.48\textwidth]{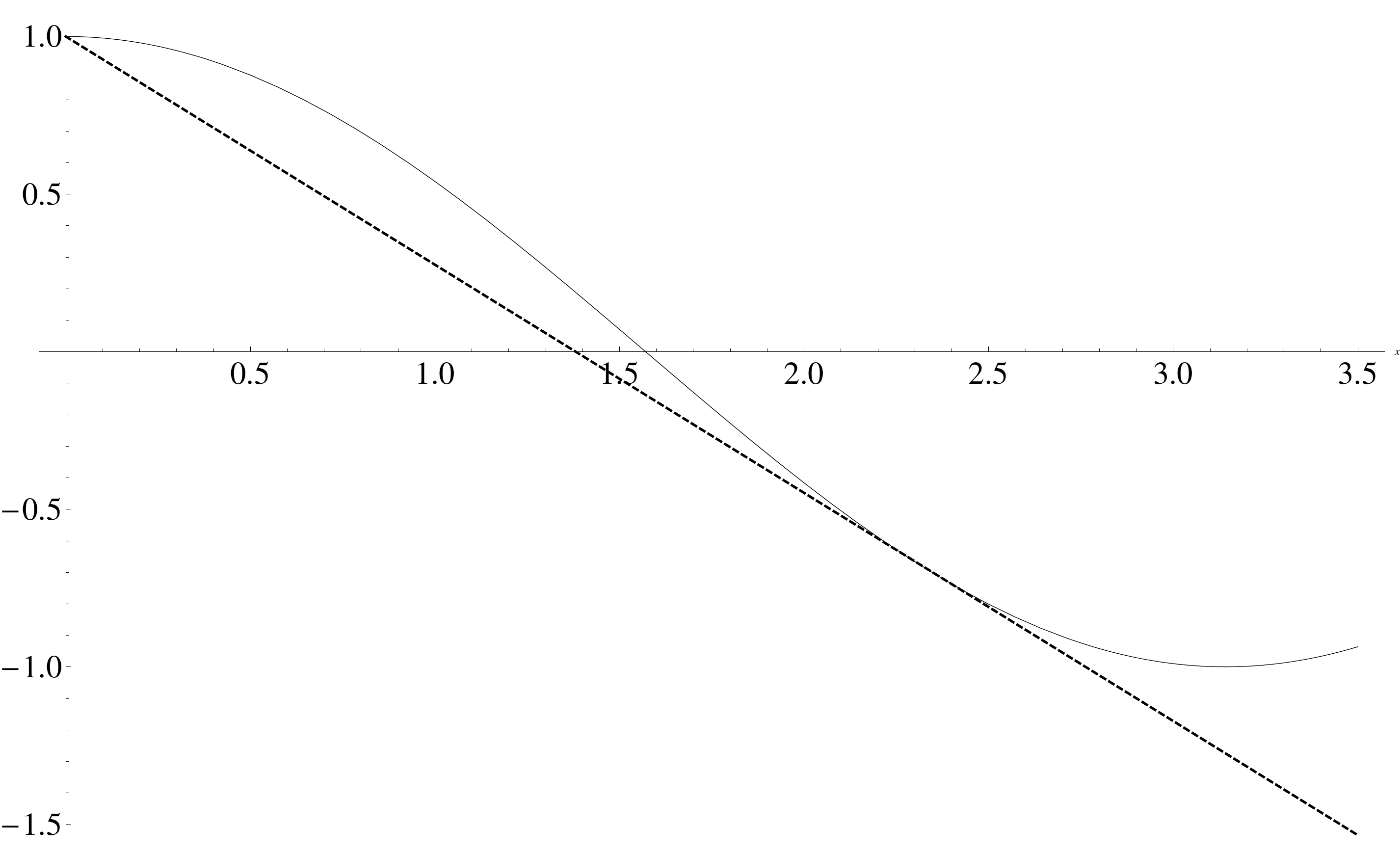}
    }
    ~
    \subfloat[Graphs of $1-x$ (solid) and $4\arccos^2x/\pi^2$ (dashed).\label{subfig-2}]{%
      \includegraphics[width=0.48\textwidth]{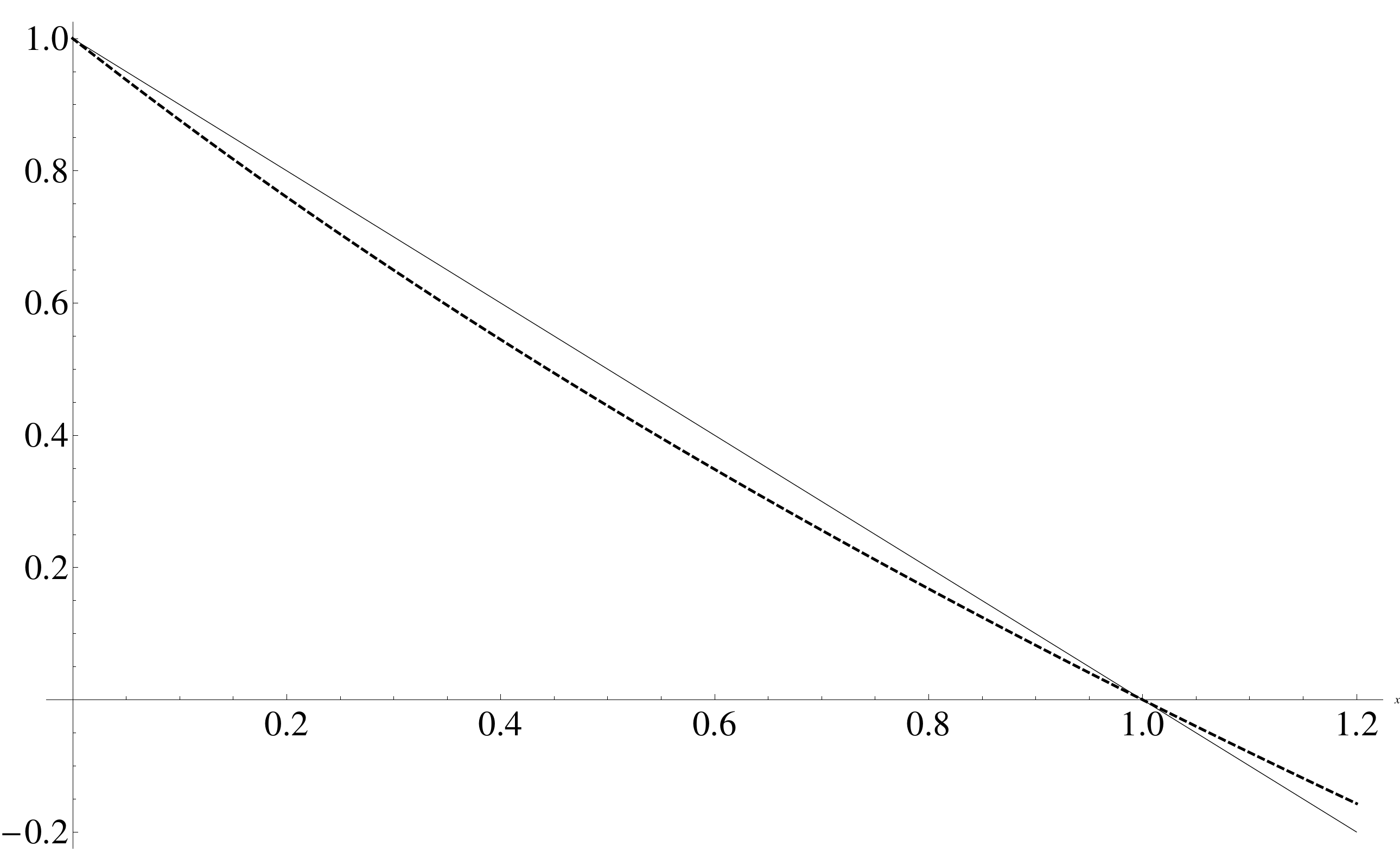}
    }
    \caption{Graphs of functions used for estimates in the proofs of Theorems \ref{ML} and \ref{DL-estimatet}.}
    \label{fig}
\end{figure}
  
\begin{proof}
Let $\psi_0$ and $\psi_1$ be as in the proof of Theorem \ref{ML}. Since $\cos x\geq 1-\beta x$ for $x\geq 0$, see Figure \ref{subfig-1}, we have that
\begin{equation}
|\Tr\psi_0^\dagger\psi_1|
\geq \Re\Tr\psi_0^\dagger\psi_1
\geq \sum_n \bra{n(\tau)}\rho_0\ket{n(\tau)}(1-\beta\tau\bar{E}_n(\tau)/\hbar)
=1-\beta\tau\bar{E}/\hbar,
\end{equation}
and since $1-x\geq 4\arccos^2 x/\pi^2$ for $0\leq x\leq 1$, see Figure \ref{subfig-2},
\begin{equation}\label{DL-improved}
\tau\geq \frac{\hbar}{\beta\bar{E}}(1-|\Tr\psi_0^\dagger\psi_1|)\geq \frac{4\hbar}{\beta\pi^2\bar{E}}\arccos^2|\Tr\psi_0^\dagger\psi_1|\geq \frac{4\hbar}{\beta\pi^2\bar{E}}\angleB{\rho_0}{\rho_1}^2.
\end{equation}
\end{proof}
\section{Optimal Hamiltonians}\label{optimal hamiltonians}
Generically, the number of independent kets in a mixed state of a finite-level quantum system equals the dimension of the Hilbert space. Such mixed states are represented by invertible density operators.
From now on we assume that $\HH$ has finite dimension $n$, and that the density operators in $\D(\sigma)$ are invertible.
Then we can put $\KK=\HH$ and express all density operators as matrices with respect to the computational basis.

\subsection{Optimal Hamiltonians for  states represented by invertible density operators}
Inspired by Theorem \ref{qspeed}, we call a Hamiltonian $\obs{H}$ \emph{optimal} for a density operator $\rho_0$ if
the solution $\rho$ to the von Neumann equation of $\obs{H}$ with initial state $\rho_0$ is a geodesic and $\xi_H^\bot$ vanishes along $\rho$.
For optimal Hamiltonians we have that $\tau=\hbar\dist{\rho_0}{\rho_1}/\Delta E$ -- at least
if the distance between $\rho_0$ and $\rho_1=\rho(\tau)$ is smaller than the injectivity radius of $\D(\sigma)$.
This follows directly from Theorem \ref{qspeed}.
In this section we characterize the optimal Hamiltonians.
Recall that a curve in $\D(\sigma)$ is a geodesic if and only if its horizontal lifts are geodesics.
Recall also that all the horizontal lifts of a given curve in $\D(\sigma)$ satisfy the same Schr\"odinger equation.
According to the next proposition (and Proposition  \ref{energiinv}), we need only characterize those optimal Hamiltonians that are parallel.
\begin{proposition}
A Hamiltonian $\obs{H}$ is optimal if and only if $\obs{H}-H\1$ is parallel and optimal.
\end{proposition}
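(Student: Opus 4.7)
The plan is to use Proposition \ref{energiinv} together with equation \eqref{H} from the proof of Theorem \ref{qspeed} to reduce the claim to a direct algebraic bookkeeping. First I would fix the interpretation of the shifted Hamiltonian: along the solution curve $\rho=\rho(t)$ of the von Neumann equation of $\obs{H}$, set $\obs{K}(t)=\obs{H}(t)-H(\rho(t))\1$. This is exactly the modified Hamiltonian to which Proposition \ref{energiinv} applies, with the time-dependent scalar $E(t)=H(\rho(t))$. That proposition then yields $X_K=X_H$, $\Delta K=\Delta H$, $\xi_K^\bot=\xi_H^\bot$, and $K\equiv 0$ along $\rho$.

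Next I would observe that, since $X_K=X_H$, the Hamiltonians $\obs{H}$ and $\obs{K}$ generate the same integral curve $\rho$; in particular, $\rho$ is a geodesic for one exactly when it is a geodesic for the other. Combined with $\xi_K^\bot=\xi_H^\bot$, this already shows that $\obs{H}$ is optimal if and only if $\obs{K}$ is optimal, settling the ``optimal'' clause in both directions of the proposition.

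It remains to relate optimality of $\obs{K}$ to its being parallel, and here the key input is equation \eqref{H}, which reads $K=\hbar\,(-i\1)\cdot\xi_K$. Since $K\equiv 0$ along $\rho$, the component of $\xi_K$ along the unit vector $-i\1$ in $\u(\sigma)$ vanishes identically. If $\obs{K}$ is also optimal, then $\xi_K^\bot=0$ along $\rho$ as well, and together with the vanishing of the $-i\1$-component this forces $\xi_K=0$ along $\rho$, i.e.\ $\obs{K}$ is parallel. Conversely, if $\obs{K}$ is parallel then $\xi_K=0$ identically, so in particular $\xi_H^\bot=\xi_K^\bot=0$ along $\rho$; if in addition $\obs{K}$ is optimal, the curve $\rho$ is a geodesic for $\obs{K}$, hence also for $\obs{H}$ by $X_H=X_K$, so $\obs{H}$ is optimal.

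The only delicate point, and the place I would be most careful, is the interpretation of the symbol $H\1$ as a time-dependent scalar operator defined along the trajectory, rather than a globally defined scalar on $\D(\sigma)$; this is what makes Proposition \ref{energiinv} genuinely applicable. Once this bookkeeping is settled, the argument reduces to the orthogonal decomposition of $\xi_K$ into its $-i\1$-component (controlled by $K$) and its complement (which coincides with $\xi_H^\bot$), and no further input is needed.
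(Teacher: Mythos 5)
Your proposal is correct and follows essentially the same route as the paper: both arguments rest on the facts that $\obs{H}$ and $\obs{H}-H\1$ have identical von Neumann solution curves and that, by \eqref{H}, the $-i\1$-component of $\xi_{H-H\1}$ vanishes so that $\xi_{H-H\1}=\xi_H^\bot$. Your detour through Proposition \ref{energiinv} and the explicit orthogonal decomposition of $\xi_K$ just spells out the one-line computation the paper leaves implicit.
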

\begin{proof}
The solution spaces of the von Neumann equations of $\obs{H}$ and $\obs{H}_{||}=\obs{H}-H\1$ are identical, and
$\xi_{H_{||}}=\xi_H^\bot$ by \eqref{H}.
\end{proof}

To each curve $\xi=\xi(t)$ in  $\u(\HH)$ we associate the Hamiltonian 
\begin{equation}
\obs{H}_\xi(t)=i\hbar\ntexp\left(\int_0^t\!\xi(t)\dt\right)\xi(t) \ptexp\left(-\int_0^t\!\xi(t)\dt\right),
\end{equation}
where $\ntexp$ and $\ptexp$ denote the negative and positive time-ordered exponentials.
The Schr\"{o}dinger equation with Hamiltonian $\obs{H}_\xi$ and initial value $\psi_0$ is
\begin{equation}\label{schrodinger}
i\hbar\dot{\psi}=\obs{H}_\xi\psi,\qquad\psi(0)=\psi_0.
\end{equation}
We describe conditions for $\xi$ that, when satisfied, makes $\obs{H}_\xi$ transport $\psi_0$, and hence every purification of $\rho_0=\psi_0\psi_0^\dagger$, along a horizontal geodesic. Recall that Proposition \ref{geodesics conserved moment} guarantees that a geodesic in $\S(\sigma)$ remains horizontal if it is initially horizontal. 

The left action by $\U(\HH)$ on $\S(\sigma)$ is free and transitive.
Therefore, the fundamental vector fields of this action define an isomorphism $\xi\mapsto X_{\xi}(\psi_0)$ from $\u(\HH)$ to the tangent space of $\S(\sigma)$ at $\psi_0$.  
We equip $\u(\HH)$ with the metric $\xi\ast\eta$ that makes this isomorphism an isometry:
\begin{equation}
\xi\ast\eta=-\frac{1}{2}\Tr((\xi\eta+\eta\xi)\rho_0).
\end{equation}
Furthermore, we write $\Lambda_\xi$ for the left invariant vector field on $\S(\sigma)$ which coincides with $X_\xi$ at $\psi_0$. Thus if $\psi=U\psi_0$, then
\begin{equation}\label{vanster}
\Lambda_\xi(\psi)=\Lambda_\xi(L_U(\psi_0))=dL_U(X_\xi(\psi_0))=U\xi\psi_0=U\xi U^\dagger\psi.
\end{equation}
To each curve $\psi=\psi(t)$ in $\S(\sigma)$, we can associate a curve $\xi=\xi(t)$ in $\u(\HH)$ by declaring $\dot{\psi}=\Lambda_\xi(\psi)$, and $\psi$  solves \eqref{schrodinger} if it extends from $\psi_0$:
\begin{proposition}\label{integral}
A curve extending from $\psi_0$ is an integral curve of $\Lambda_\xi$ if and only if it satisfies the Schr\"{o}dinger equation \eqref{schrodinger}.
\end{proposition}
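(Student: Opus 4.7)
The plan is to exhibit a single curve $\tilde\psi(t)=U(t)\psi_0$ that simultaneously solves both equations uniquely, where $U(t)=\ntexp\bigl(\int_0^t\xi(s)\ds\bigr)$, and then conclude by uniqueness.

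First, I would unpack the time-ordered exponentials. Since $\xi(t)\in\u(\HH)$ is anti-Hermitian, the path $U(t)$ is unitary with $U(0)=\1$ and satisfies $\dot U = U\xi$; correspondingly, $\ptexp\bigl(-\int_0^t\xi(s)\ds\bigr)=U(t)^\dagger$. This rewrites the Hamiltonian cleanly as $\obs{H}_\xi(t)=i\hbar\,U(t)\,\xi(t)\,U(t)^\dagger$.

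Next, I would check that $\tilde\psi(t):=U(t)\psi_0$ satisfies both sides of the claimed equivalence. Differentiating, $\dot{\tilde\psi}=\dot U\psi_0=U\xi\psi_0=U\xi U^\dagger\tilde\psi$, so $i\hbar\dot{\tilde\psi}=\obs{H}_\xi\tilde\psi$, i.e.\ \eqref{schrodinger} holds. At the same time, since $\tilde\psi=U\psi_0$ with $U\in\U(\HH)$, formula \eqref{vanster} gives $\Lambda_\xi(\tilde\psi)=U\xi U^\dagger\tilde\psi$, which is exactly $\dot{\tilde\psi}$. So $\tilde\psi$ is the integral curve of $\Lambda_\xi$ extending from $\psi_0$ as well.

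Finally, I would invoke uniqueness. The Schr\"odinger equation is a linear ODE in $\psi$ with a time-dependent coefficient, so it has a unique solution starting at $\psi_0$. The equation $\dot\psi=\Lambda_\xi(\psi)$ has a unique solution too: by the standing assumption in Section~\ref{optimal hamiltonians} that $\dim\KK=\dim\HH$ and elements of $\S(\sigma)$ are invertible, $\U(\HH)$ acts freely and transitively on $\S(\sigma)$, so writing $\psi=U\psi_0$ turns the equation into the linear ODE $\dot U = U\xi$ on $\U(\HH)$. Hence any curve solving either equation and starting at $\psi_0$ must coincide with $\tilde\psi$, proving the equivalence. The only fiddly point, and the main thing to be careful about, is matching the conventions of $\ptexp$ and $\ntexp$ so that $\obs{H}_\xi$ really equals $i\hbar\,U\xi U^\dagger$; once that is pinned down the rest is a one-line computation on each side.
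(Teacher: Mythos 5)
Your proposal is correct and follows essentially the same route as the paper: both reduce the problem to the unitary curve $U(t)=\ntexp\bigl(\int_0^t\xi\dt\bigr)$ satisfying $\dot U=U\xi$, use \eqref{vanster} to identify $\Lambda_\xi(U\psi_0)=U\xi U^\dagger U\psi_0$ with the Schr\"odinger velocity $\obs{H}_\xi\psi/i\hbar$, and close the argument by uniqueness of ODE solutions. The only cosmetic difference is that you verify both equations on the explicit candidate and invoke uniqueness twice, whereas the paper starts from the integral curve of $\Lambda_\xi$ and appeals to uniqueness only for the Schr\"odinger equation.
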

\begin{proof}
Suppose $\psi=\psi(t)$ is the integral curve of $\Lambda_\xi$ that extends from $\psi_0$.
There is a unique curve of unitaries $U=U(t)$ such that $\psi=U\psi_0$. By \eqref{vanster} and the fact that $\psi_0$ is invertible, this curve satisfies $\dot U=U\xi$ and $U(0)=\1$. Thus $U(t)=\ntexp\big(\int_0^t\xi\dt\big)$. Now,
\begin{equation}
i\hbar\dot{\psi}=i\hbar\dot U\psi_0=i\hbar U\xi\psi_0=i\hbar U\xi U^\dagger U\psi_0=i\hbar U\xi U^\dagger\psi= \obs{H}_\xi\psi.
\end{equation} 
The opposite implication follows from the uniqueness of solutions to \eqref{schrodinger}.
\end{proof}
A curve $\xi=\xi(t)$ in $\u(\HH)$ satisfies the \emph{Euler-Arnold equation} if $\dot{\xi}=\ad_\xi^*\xi$, 
where $\ad_\xi^*\xi$ is the unique element in $\u(\HH)$ such that $\ad_\xi^*\xi\ast\eta=\xi\ast[\xi,\eta]$ for every $\eta$ in $\u(\HH)$. 
According to the next proposition, integral curves of $\Lambda_\xi$ are geodesics if and only if $\xi$ satisfies the Euler-Arnold equation:
\begin{proposition}\label{euler-arnold}
Suppose $\xi=\xi(t)$ is a curve in $\u(\HH)$, and let $\psi=\psi(t)$ be an integral curve of $\Lambda_{\xi}$.
Then $\psi$ is a geodesic if and only if $\xi$ satisfies the Euler-Arnold equation.
\end{proposition}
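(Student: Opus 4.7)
The plan is to reduce the statement to the classical Euler--Arnold (equivalently, Euler--Poincaré) theorem for left-invariant metrics on a Lie group, after first verifying that the identification $\U(\HH)\cong\S(\sigma)$ really does turn the ambient Hilbert--Schmidt metric into a left-invariant metric. Since $\rho_0$ is invertible, the purification $\psi_0:\HH\to\HH$ is an invertible linear map, and so the orbit map $U\mapsto U\psi_0$ is a diffeomorphism $\U(\HH)\to\S(\sigma)$. Under this identification $\Lambda_\xi$ is the genuine left-invariant vector field generated by $\xi\in\u(\HH)$; moreover, the metric $\ast$ on $\u(\HH)$ agrees with $G$ at $\psi_0$ by definition, and the left action is isometric, so the restriction of $G$ to $\S(\sigma)$ pulls back to the left-invariant metric on $\U(\HH)$ determined by $\ast$.

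First I would use Proposition~\ref{integral} to rewrite the integral curve $\psi(t)$ of $\Lambda_\xi$ as $\psi(t)=U(t)\psi_0$ with $\dot U=U\xi$, so that $\xi(t)=U(t)^{-1}\dot U(t)$ is precisely the left-trivialized velocity of $U$. The problem therefore becomes: a curve $U(t)$ in $\U(\HH)$ with a left-invariant Riemannian structure is a geodesic if and only if its left-trivialized velocity satisfies $\dot\xi=\ad_\xi^*\xi$. This is the standard Euler--Arnold statement.

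To make the derivation self-contained I would compute $\nabla_t\dot\psi$ directly. The Koszul formula, specialized to left-invariant vector fields $\Lambda_\xi,\Lambda_\eta$ on a Lie group equipped with a left-invariant metric, yields
\begin{equation}
\nabla_{\Lambda_\xi}\Lambda_\eta=\tfrac{1}{2}\Lambda_{[\xi,\eta]-\ad_\xi^*\eta-\ad_\eta^*\xi},
\end{equation}
and in particular $\nabla_{\Lambda_\xi}\Lambda_\xi=-\Lambda_{\ad_\xi^*\xi}$. Treating $\Lambda_{\xi(t)}$ as a time-dependent left-invariant field along $\psi$, the covariant derivative along the curve splits as
\begin{equation}
\nabla_t\dot\psi=\Lambda_{\dot\xi}(\psi)+\nabla_{\Lambda_\xi}\Lambda_\xi\big|_{\psi}=\Lambda_{\dot\xi-\ad_\xi^*\xi}(\psi).
\end{equation}
Because $\eta\mapsto\Lambda_\eta(\psi)$ is a linear isomorphism from $\u(\HH)$ onto $\T_\psi\S(\sigma)$ (again using freeness of the action), $\nabla_t\dot\psi=0$ if and only if $\dot\xi=\ad_\xi^*\xi$, which gives both implications simultaneously.

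The main obstacle is not the calculation itself but the setup: one must be sure that $\S(\sigma)$ is genuinely modeled on the Lie group $\U(\HH)$ with a \emph{left-invariant} metric, which uses both the invertibility of $\psi_0$ (freeness and transitivity) and the fact that $L_U$ acts isometrically on $(\L(\KK,\HH),G)$. Once this is established, the Koszul identity above is a routine consequence of the definitions of the Levi-Civita connection and of $\ad^*$ via the pairing $\xi\ast[\cdot,\cdot]$, and the Euler--Arnold equation drops out mechanically.
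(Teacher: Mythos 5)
Your proposal is correct and follows essentially the same route as the paper: both decompose $\nabla_t\dot\psi=\Lambda_{\dot\xi}(\psi)+\nabla_{\Lambda_\xi}\Lambda_\xi(\psi)$, apply the Koszul formula to the left-invariant fields $\Lambda_\xi,\Lambda_\eta$ to obtain $\nabla_{\Lambda_\xi}\Lambda_\xi=-\Lambda_{\ad_\xi^*\xi}$, and conclude $\nabla_t\dot\psi=\Lambda_{\dot\xi-\ad_\xi^*\xi}(\psi)$. Your additional remarks making explicit the identification of $\S(\sigma)$ with $\U(\HH)$ carrying the left-invariant metric $\ast$ are a correct elaboration of what the paper leaves implicit, not a different argument.
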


\begin{proof}
The covariant derivative of the velocity field of $\psi$ is  
$\nabla_{t}\dot \psi=\Lambda_{\dot\xi}(\psi)+\nabla_{\Lambda_\xi}\Lambda_{\xi}(\psi)$.
By the Kozul formula \cite[p 160, Prop 2.3]{Kobayashi_etal1996I},
\begin{equation}
\begin{split}
2G(\nabla_{\Lambda_\xi}\Lambda_\xi,\Lambda_\eta)
&=\Lambda_{\xi}G(\Lambda_{\xi},\Lambda_{\eta})+\Lambda_{\xi}G(\Lambda_{\eta},\Lambda_{\xi})-\Lambda_{\eta}G(\Lambda_{\xi},\Lambda_{\xi})\\
&\qquad-G(\Lambda_\xi,[\Lambda_\xi,\Lambda_\eta])+G(\Lambda_\xi,[\Lambda_\eta,\Lambda_\xi])+G(\Lambda_\eta,[\Lambda_\xi,\Lambda_\xi])\\
&=-\xi\ast[\xi,\eta]+\xi\ast[\eta,\xi]\\
&=-2\ad_\xi^*\xi\ast\eta\\
&=-\,2G(\Lambda_{\ad_\xi^*\xi},\Lambda_\eta)
\end{split}
\end{equation}
for every $\eta$ in $\u(\HH)$. Thus $\nabla_{t}\dot{\psi}= \Lambda_{\dot\xi-\ad_\xi^*\xi}(\psi)$. 
\end{proof}
\noindent The following theorem, which follows from Propositions \ref{geodesics conserved moment}, \ref{integral}, and \ref{euler-arnold},  summarizes the conditions under which the Hamiltonian $\obs{H}_\xi$ transports $\psi_0$ along a horizontal geodesic.
\begin{theorem}\label{geodesics}
Every curve in $\S(\sigma)$ that extends from $\psi_0$ is the solution to  \eqref{schrodinger} for some curve $\xi$.
Moreover, the solution to  \eqref{schrodinger} is a horizontal geodesic if and only if $\xi$ satisfies the Euler-Arnold equation, and the fundamental vector field of $\xi(0)$ is horizontal at $\psi_0$.\qed
\end{theorem}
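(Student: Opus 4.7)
The plan is to obtain the theorem as a direct synthesis of Propositions \ref{geodesics conserved moment}, \ref{integral}, and \ref{euler-arnold}; the sentence preceding the statement already signals this.

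For the first assertion, I would exploit the fact (noted just before \eqref{vanster}) that the left $\U(\HH)$-action on $\S(\sigma)$ is free and transitive. Given any smooth curve $\psi$ in $\S(\sigma)$ with $\psi(0)=\psi_0$, there exists a unique smooth curve $U=U(t)$ in $\U(\HH)$ with $U(0)=\1$ and $\psi=U\psi_0$. I would then set $\xi(t)=U(t)^\dagger \dot U(t)\in \u(\HH)$, so that $\dot U = U\xi$, and compute
\begin{equation*}
\dot\psi = \dot U\,\psi_0 = U\xi\,\psi_0 = U\xi U^\dagger\,\psi = \Lambda_\xi(\psi),
\end{equation*}
where the last equality is \eqref{vanster}. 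Thus $\psi$ is the integral curve of $\Lambda_\xi$ through $\psi_0$, and Proposition \ref{integral} identifies it as the solution of \eqref{schrodinger} with Hamiltonian $\obs{H}_\xi$.

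For the equivalence in the second assertion, I would first observe that $\dot\psi(0)=\Lambda_{\xi(0)}(\psi_0)=X_{\xi(0)}(\psi_0)$, so horizontality of the fundamental vector field of $\xi(0)$ at $\psi_0$ is nothing but horizontality of $\dot\psi$ at $t=0$. The forward direction is then immediate: if $\psi$ is a horizontal geodesic, Proposition \ref{euler-arnold} gives $\dot\xi=\ad_\xi^*\xi$, while horizontality at $t=0$ is a special case of the assumed horizontality of $\psi$. Conversely, the Euler-Arnold equation for $\xi$ makes $\psi$ a geodesic by Proposition \ref{euler-arnold}, and Proposition \ref{geodesics conserved moment} (conservation of the metric momentum along geodesics) then upgrades initial horizontality to horizontality for all $t$.

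There is no real obstacle; the only mildly technical point is the smoothness of $U$ and hence of $\xi$, which follows from the reconstruction $U(t)=\ntexp\bigl(\int_0^t \xi\,\dt\bigr)$ and the standard smoothness theory for matrix ODEs.
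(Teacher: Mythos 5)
Your proposal is correct and follows exactly the route the paper intends: the theorem is stated with its proof deferred to Propositions \ref{geodesics conserved moment}, \ref{integral}, and \ref{euler-arnold}, and your argument is a faithful unpacking of that synthesis (including the reconstruction $\xi=U^\dagger\dot U$ via the free and transitive action, and the use of conserved metric momenta to propagate initial horizontality). Nothing further is needed.
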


\subsection{Density operators with two distinct eigenvalues, and almost pure qubit systems}\label{comparison}
A geodesic orbit space is a Riemannian homogeneous space 
in which each geodesic is an orbit of a one-parameter subgroup of its isometry group.
If $\sigma$ contains precisely two different, possibly degenerate, eigenvalues, then $\D(\sigma)$ is a geodesic orbit space because 
every geodesic is then generated by a time-independent Hamiltonian. 
We verify this for the geodesics extending from a density operator which is diagonal with respect to the computational basis. The general result then follows from the fact that the conjugation action of $\U(\HH)$ on $\D(\sigma)$ is transitive and by isometries. 

Assume $\sigma=(p_1,p_2;m_1,m_2)$, where $m_1+m_2=n$.
Let $\xi$ be an anti-Hermitian operator on $\HH$ such that $X_\xi$ is horizontal along the fiber over $\rho_0=P(\sigma)$, where $P(\sigma)$ is the density operator defined in \eqref{P}. Further, let 
$\eta$ be any anti-Hermitian operator on $\HH$, and express $\xi$ and $\eta$ as matrices with respect to the computational basis:
\begin{equation}
\xi=\begin{bmatrix} 0 & \xi_{12}\\ -\xi_{12}^\dagger & 0 \end{bmatrix},
\qquad
\eta=\begin{bmatrix} \eta_{11} & \eta_{12}\\ -\eta_{12}^\dagger & \eta_{22}\end{bmatrix}.
\end{equation}
Here, $\xi_{12}$ and $\eta_{12}$ have dimensions $m_1\times m_2$, and $\eta_{11}$ and $\eta_{22}$ have dimensions $m_1\times m_1$ and $m_2\times m_2$, respectively. Now $\ad_\xi^*\xi=0$ because
\begin{equation}
\xi\ast [\xi,\eta]
=\frac{1}{2}\left(p_1\Tr[\xi_{12}\xi_{12}^\dagger,\eta_{11}] + p_2\Tr[\xi_{12}^\dagger\xi_{12},\eta_{22}]\right)=0,
\end{equation}
as commutators of matrices have vanishing trace. This in turn implies that every \emph{curve} $\xi$ which satisfies the conditions in Theorem \ref{geodesics} is stationary, and hence that $\obs{H}_\xi$ is time-independent.
Next we use this observation to produce density operators representing mixed qubit states for which the second inequality in  \eqref{estimat} is strict.

Assume $\dim\HH=2$. Two independent qubits are represented by the computational basis vectors $\ket{1}$ and $\ket{2}$.
Consider an ensemble of qubits prepared so that the proportion of qubits in state $\ket{j}$ is $p_j$, where $p_1>p_2>0$.
The initial state of the ensemble is represented
by the density operator $\rho_0=\diag(p_1,p_2)$. 
Chose $\psi_0=\diag(\sqrt{p_1},\sqrt{p_2})$ in the fiber over $\rho_0$,
and let $\xi$ be an arbitrary anti-Hermitian operator on $\HH$ such that $X_\xi(\psi_0)$ is horizontal:
\begin{equation}
\xi=\begin{bmatrix} 0 & ae^{i\theta}\\ -ae^{-i\theta} & 0\end{bmatrix},\qquad a>0.
\end{equation}
The solution to the Schr\"odinger equation of $\obs{H}_\xi=i\hbar\xi$ which extends from $\psi_0$ is
\begin{equation}
\psi(t)=\begin{bmatrix} \sqrt{p_1}\cos{at} & \sqrt{p_2}e^{i\theta}\sin{at}\\ -\sqrt{p_1}e^{-i\theta}\sin{at} & \sqrt{p_2}\cos{at}\end{bmatrix}.
\end{equation}
This curve is a horizontal geodesic, and its projection is a geodesic extending from $\rho_0$: 
\begin{equation}
\rho(t)=\begin{bmatrix} p_1\cos^2{at}+p_2\sin^2{at} & e^{i\theta}(p_2-p_1)\cos {at}\sin {at}\\ e^{-i\theta}(p_2-p_1)\cos{at}\sin{at} & p_1\sin^2{at}+p_2\cos^2{at}\end{bmatrix}.
\end{equation}
Set $\rho_1=\rho(\tau)$, let $d>0$ be the (spectrum dependent) injectivity radius of $\D(\sigma)$, and assume that $0<\tau<d/a$.
Then $\rho$ is a shortest geodesic between $\rho_0$ and $\rho_1$, and
\begin{equation}
\dist{\rho_0}{\rho_1}=\length{\rho}=\frac{1}{\hbar}\int_0^\tau\Delta H_\xi(\rho)\dt= a\tau.
\end{equation}
Next, we will argue that the Bures length between $\rho_0$ and $\rho_1$ is strictly less than $ a\tau$, and hence that there exists states for which the second inequality in \eqref{estimat} is strict.

By \cite[p 225, Eq (9.47)]{Bengtsson_etal2008}, the fidelity of $\rho_0$ and $\rho_1$ is 
\begin{equation}\label{trogen}
F(\rho_0,\rho_1)
=\Tr(\rho_0\rho_1)+2\sqrt{\det\rho_0\det\rho_1}
=(p_1-p_2)^2\cos^2 (a\tau)+4p_1p_2.
\end{equation}
Therefore, the Bures length between $\rho_0$ and $\rho_1$ is 
\begin{equation}
\angleB{\rho_0}{\rho_1}
=\arccos\sqrt{(p_1-p_2)^2\cos^2(a\tau)+4p_1p_2}.
\end{equation}
The difference $a\tau-\arccos\sqrt{(p_1-p_2)^2\cos^2 (a\tau)+4p_1p_2}$ is a positive function of $\tau>0$.
(In Figure \ref{plot} we have plotted the difference for three different spectra, when $0<a\tau<\pi$. Note, however, that $d$ might be smaller than $\pi$.)
Consequently, $\dist{\rho_0}{\rho_1}>\angleB{\rho_0}{\rho_1}$. 
\begin{figure}[htbp]
\centering
\includegraphics[width=0.70\textwidth]{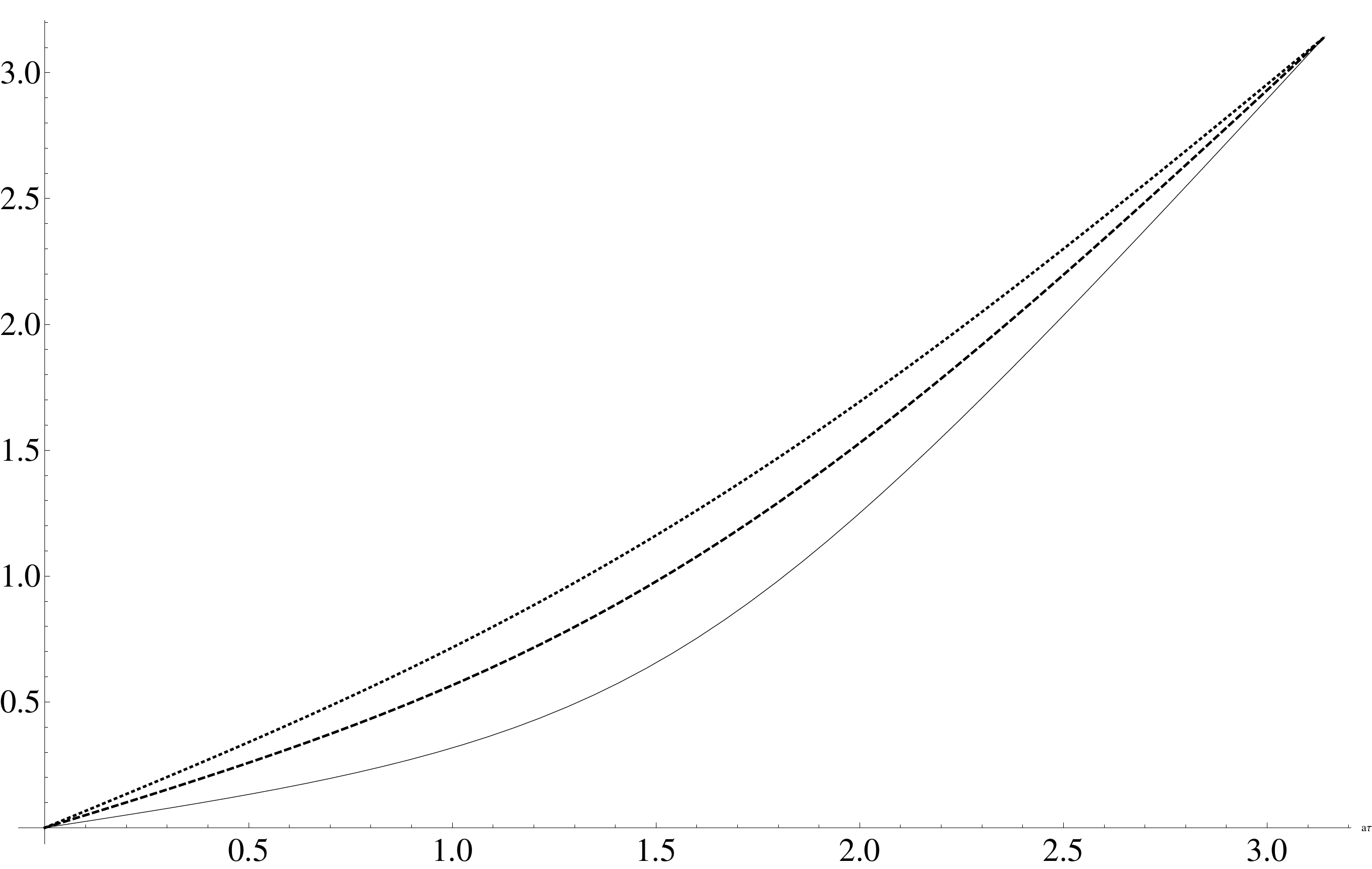}
\caption{Graph of $a\tau-\arccos\sqrt{(p_1-p_2)^2\cos^2(a\tau)+4p_1p_2}$, as a function of $a\tau$, for
 $(p_1,p_2)=(2/3,1/3)$ (dotted), $(p_1,p_2)=(3/4,1/4)$ (dashed), and $(p_1,p_2)=(7/8,1/8)$ (solid).}
\label{plot}
\end{figure}

\section{Conclusion}
Quantum speed limits are fundamental lower bounds on the time required for a quantum systems to evolve from one state into another.
In this paper, we have derived a sharp Mandelstam-Tamm quantum speed limit, by differential geometric methods, and we have characterized the Hamiltonians that optimize evolution time for finite-level quantum systems in generic mixed states.
The paper also contains a quantum speed limit of Margolus-Levitin type, which, under certain circumstances, such as that the Hamiltonian is time-independent, is sharper than those known previously. 

Quantum speed limits for open quantum systems are also available \cite{Taddei_etal2013,delCampo_etal2013,Deffner_etal2013open}. It is the intention of the authors to develop differential geometric methods by which one can derive quantum speed limits also for open systems.


\begin{thebibliography}{10}

\bibitem{Zwierz_etal2010}
Marcin Zwierz, Carlos~A. P\'erez-Delgado, and Pieter Kok.
\newblock General optimality of the heisenberg limit for quantum metrology.
\newblock {\em Phys. Rev. Lett.}, 105:180402, Oct 2010.

\bibitem{Giovannetti_etal2011}
Vittorio Giovannetti, Seth Lloyd, and Lorenzo Maccone.
\newblock Advances in quantum metrology.
\newblock {\em Nat Photon}, 5:222--229, 2011.

\bibitem{Zwierz_etal2012}
Marcin Zwierz, Carlos~A. P\'erez-Delgado, and Pieter Kok.
\newblock Ultimate limits to quantum metrology and the meaning of the
  heisenberg limit.
\newblock {\em Phys. Rev. A}, 85:042112, Apr 2012.

\bibitem{Bekenstein1981}
Jacob~D. Bekenstein.
\newblock Energy cost of information transfer.
\newblock {\em Phys. Rev. Lett.}, 46:623--626, Mar 1981.

\bibitem{Lloyd2000}
Seth Lloyd.
\newblock Ultimate physical limits to computation.
\newblock {\em Nature}, 406:1047--1054, 2000.

\bibitem{Giovannetti_etal2003}
V.~Giovannetti, S.~Lloyd, and L.~Maccone.
\newblock The role of entanglement in dynamical evolution.
\newblock {\em EPL (Europhysics Letters)}, 62(5):615, 2003.

\bibitem{Caneva_etal2009}
T.~Caneva, M.~Murphy, T.~Calarco, R.~Fazio, S.~Montangero, V.~Giovannetti, and
  G.~E. Santoro.
\newblock Optimal control at the quantum speed limit.
\newblock {\em Phys. Rev. Lett.}, 103:240501, Dec 2009.

\bibitem{Bason_etal2012}
M.~G. Bason, M.~Viteau, N.~Malossi, P.~Huillery, E.~Arimondo, D.~Ciampini,
  R.~Fazio, V.~Giovannetti, R.~Mannella, and O.~Morsch.
\newblock High-fidelity quantum driving.
\newblock {\em Nat Phys}, 8(2):147 -- 152, Feb 2012.

\bibitem{Poggi_etal2013}
Pablo~M. Poggi, Fernando~C. Lombardo, and Diego~A. Wisniacki.
\newblock Controlling open quantum systems using fast transitions.
\newblock {\em Phys. Rev. A}, 87:022315, Feb 2013.

\bibitem{Hegerfeldt2013}
Gerhard~C. Hegerfeldt.
\newblock Driving at the quantum speed limit: Optimal control of a two-level
  system.
\newblock {\em Phys. Rev. Lett.}, 111:260501, Dec 2013.

\bibitem{Poggi_etal2013EPL}
P.~M. Poggi, F.~C. Lombardo, and D.~A. Wisniacki.
\newblock Quantum speed limit and optimal evolution time in a two-level system.
\newblock {\em EPL (Europhysics Letters)}, 104(4):40005, 2013.

\bibitem{Deffner_etal2010}
Sebastian Deffner and Eric Lutz.
\newblock Generalized clausius inequality for nonequilibrium quantum processes.
\newblock {\em Phys. Rev. Lett.}, 105:170402, Oct 2010.

\bibitem{Margolus_etal1998}
Norman Margolus and Lev~B. Levitin.
\newblock The maximum speed of dynamical evolution.
\newblock {\em Physica D}, 120:188--195, 1998.

\bibitem{Bhattacharyya1983}
K~Bhattacharyya.
\newblock Quantum decay and the mandelstam-tamm-energy inequality.
\newblock {\em Journal of Physics A: Mathematical and General}, 16(13):2993,
  1983.

\bibitem{Fleming1973}
G.N. Fleming.
\newblock A unitarity bound on the evolution of nonstationary states.
\newblock {\em Il Nuovo Cimento A}, 16(2):232--240, 1973.

\bibitem{Anandan_etal1990}
J.~Anandan and Y.~Aharonov.
\newblock Geometry of quantum evolution.
\newblock {\em Phys. Rev. Lett.}, 65:1697--1700, 1990.

\bibitem{Bures1969}
D.~Bures.
\newblock An extension of kakutani's theorem on infinite product measures to
  the tensor product of semifinite w*-algebras.
\newblock {\em Trans. Am. Math. Soc.}, 135:199--212, 1969.

\bibitem{Uhlmann1992Energy}
A.~Uhlmann.
\newblock An energy dispersion estimate.
\newblock {\em Phys. Lett. A}, 161(4):329--331, 1992.

\bibitem{Jones_etal2010}
Philip~J. Jones and Pieter Kok.
\newblock Geometric derivation of the quantum speed limit.
\newblock {\em Phys. Rev. A}, 82:022107, Aug 2010.

\bibitem{Zwierz2012}
Marcin Zwierz.
\newblock Comment on ``geometric derivation of the quantum speed limit''.
\newblock {\em Phys. Rev. A}, 86:016101, Jul 2012.

\bibitem{Giovannetti_etal2003(2)}
Vittorio Giovannetti, Seth Lloyd, and Lorenzo Maccone.
\newblock Quantum limits to dynamical evolution.
\newblock {\em Phys. Rev. A}, 67:052109, May 2003.

\bibitem{Giovannetti_etal2003proc}
Vittorio Giovannetti, Seth Lloyd, and Lorenzo Maccone.
\newblock The quantum speed limit.
\newblock In {\em Fluctuations and Noise in Photonics and Quantum Optics.
  Edited by Abbott, Derek; Shapiro, Jeffrey H.; Yamamoto, Yoshihisa.
  Proceedings of the SPIE}, volume 5111, pages 1--6, 2003.

\bibitem{Deffner_etal2013}
Sebastian Deffner and Eric Lutz.
\newblock Energy-time uncertainty relation for driven quantum systems.
\newblock {\em Journal of Physics A: Mathematical and Theoretical},
  46(33):335302, 2013.

\bibitem{Nielsen_etal2010}
M.~A. Nielsen and I.~L. Chuang.
\newblock {\em Quantum computation and quantum information}.
\newblock Cambridge University Press, Cambridge, 2010.

\bibitem{Bengtsson_etal2008}
I.~Bengtsson and K.~\.{Z}yczkowski.
\newblock {\em Geometry of quantum states - An introduction to entanglement}.
\newblock Cambridge University Press, 2008.

\bibitem{GUR}
O.~Andersson and H.~Heydari.
\newblock Geometric uncertainty relation for mixed quantum states.
\newblock {\em arXiv:1302.2074}, 2013.

\bibitem{Marsden_etal1974}
J.~Marsden and A.~Weinstein.
\newblock Reduction of symplectic manifolds with symmetry.
\newblock {\em Reports on Mathematical Physics}, 5(1):121 -- 130, 1974.

\bibitem{Meyer_1973}
K.~Meyer.
\newblock Symmetries and integrals in mechanics.
\newblock In {\em Dynamical Systems (Proc. Sympos., Univ. Bahia, Salvador,
  1971)}, pages 259--272. Academic Press, New York, 1973.

\bibitem{Kobayashi_etal1996I}
S.~Kobayashi and K.~Nomizu.
\newblock {\em Foundations of differential geometry. {V}ol. {I}}.
\newblock Wiley Classics Library. John Wiley \& Sons Inc., New York, 1996.

\bibitem{Kobayashi_etal1996II}
S.~Kobayashi and K.~Nomizu.
\newblock {\em Foundations of differential geometry. {V}ol. {II}}.
\newblock Wiley Classics Library. John Wiley \& Sons Inc., New York, 1996.

\bibitem{Hermann1960}
R.~Hermann.
\newblock A sufficient condition that a mapping of riemannian manifolds be a
  fibre bundle.
\newblock {\em Proc. Am. Math. Soc.}, 11(2):236--242, 1960.

\bibitem{Aharonov_etal1961}
Y.~Aharonov and D.~Bohm.
\newblock Time in the quantum theory and the uncertainty relation for time and
  energy.
\newblock {\em Phys. Rev.}, 122:1649--1658, Jun 1961.

\bibitem{Englert1996}
B.-G. Englert.
\newblock Fringe visibility and which-way information: An inequality.
\newblock {\em Phys. Rev. Lett.}, 77:2154--2157, Sep 1996.

\bibitem{Markham_etal2008}
D.~Markham, J.~A. Miszczak, Z.~Pucha\l{}a, and K.~\ifmmode~\dot{Z}\else
  \.{Z}\fi{}yczkowski.
\newblock Quantum state discrimination: A geometric approach.
\newblock {\em Phys. Rev. A}, 77:042111, Apr 2008.

\bibitem{Taddei_etal2013}
M.~M. Taddei, B.~M. Escher, L.~Davidovich, and R.~L. de~Matos~Filho.
\newblock Quantum speed limit for physical processes.
\newblock {\em Phys. Rev. Lett.}, 110:050402, Jan 2013.

\bibitem{delCampo_etal2013}
A.~del Campo, I.~L. Egusquiza, M.~B. Plenio, and S.~F. Huelga.
\newblock Quantum speed limits in open system dynamics.
\newblock {\em Phys. Rev. Lett.}, 110:050403, Jan 2013.

\bibitem{Deffner_etal2013open}
Sebastian Deffner and Eric Lutz.
\newblock Quantum speed limit for non-markovian dynamics.
\newblock {\em Phys. Rev. Lett.}, 111:010402, Jul 2013.

\end{thebibliography}
\end{document}